\newtheorem{prop}{\bf Proposition}[section]
\newtheorem{lem}[prop]{\bf Lemma}
\newtheorem{cor}[prop]{\bf Corollary}
\newtheorem{thm}[prop]{\bf Theorem}
\newtheorem{rem}[prop]{\bf Remark}
\newenvironment{proof}
  {\begin{trivlist}\item[]{\bf Proof.}}
  {\hspace*{\fill}{$\bowtie$}\end{trivlist}}
\newcommand{\setR}{\mathds{R}}
\newcommand{\setZ}{\mathds{Z}}
\newcommand{\setN}{\mathds{N}}
\newcommand{\diff}{\mathrm{\,d}}
\begin{document}
\pagestyle{empty}
\markboth
  {S.~Liebscher, J.~H{\"a}rterich, K.~Webster, M.~Georgi}
  {Ancient Dynamics in Bianchi Models: Approach to Periodic Cycles}
\pagenumbering{arabic}

\vspace*{\fill}

\begin{center}
\Large\bfseries
Ancient Dynamics in Bianchi Models: \\
Approach to Periodic Cycles
\end{center}

\vspace*{\fill}

\begin{center}
   \textbf{\large Stefan Liebscher}
\\ Freie Universit\"at Berlin, Institut f\"ur Mathematik
\\ Arnimallee 3, 14195 Berlin, Germany
\\ \texttt{sliebsch@zedat.fu-berlin.de}
\\ ~
\\ \textbf{\large J\"org H\"arterich}
\\ Ruhr-Universit\"at, Fakult\"at f\"ur Mathematik
\\ Universit\"atsstr.~150, 44780 Bochum, Germany
\\ \texttt{Joerg.Haerterich@ruhr-uni-bochum.de}
\\ ~
\\ \textbf{\large Kevin Webster}
\\ Imperial College London, Department of Mathematics
\\ South Kensington Campus, London SW7 2AZ, UK
\\ \texttt{knwebster@gmail.com}
\\ ~
\\ \textbf{\large Marc Georgi}
\\ Freie Universit\"at Berlin, Institut f\"ur Mathematik
\\ Arnimallee 3, 14195 Berlin, Germany
\\ \texttt{MarcGeorgi@gmx.de}
\end{center}

\vspace*{\fill}

\begin{center}
Preprint, October 2010 
\end{center}

\vspace*{\fill}

\clearpage

\vspace*{\fill}

\begin{abstract}\noindent

We consider cosmological models of Bianchi type. 
In particular, we are interested in the $\alpha$-limit dynamics near the 
Kasner circle of equilibria for Bianchi classes VIII and IX.
They correspond to cosmological models close to the big-bang singularity.

We prove the existence of a codimension-one family of 
solutions that limit, for $t \to -\infty$, onto a heteroclinic 3-cycle to 
the Kasner circle of equilibria.
The theory extends to arbitrary heteroclinic chains that are uniformly bounded away from
the three critical Taub points on the Kasner circle, in particular to all closed heteroclinic
cycles of the Kasner map.

\end{abstract}

\vspace*{\fill}

\cleardoublepage
\setcounter{page}{1}
\pagestyle{myheadings}

\section{Introduction}
\label{secIntroduction}

The Einstein field equations
\[
  R_{\alpha\beta} - \frac{1}{2}Rg_{\alpha\beta} = T_{\alpha\beta}
\]
restricted to spatially homogeneous, non-isotropic space-times $g_{\alpha\beta}$ 
with an ideal non-tilted fluid yield a system of ordinary 
differential equations, the Bianchi class-A model (\ref{eq5dimBianchi}) \cite{WainwrightHsu1989-BianchiA, WainwrightEllis2005-Cosmology}.

The $\alpha$-limit, $t\to-\infty$, of this system corresponds to the dynamics 
near the big-bang singularity. 
The dynamics in this limit, however, is not yet fully understood.

It has been conjectured \cite{Misner1969-Mixmaster, BugalhoEtAl1986-BianchiIX} 
that the dynamics follows the (formal) Kasner map
defined on the Kasner circle of equilibria and given by heteroclinic connections to 
the equilibria on the Kasner circle.
Equilibria on the Kasner circle represent self-similarly expanding space-times.
A trajectory close to a formal heteroclinic sequence thus corresponds to a space-time
that is close to different self-similar spacetimes as it approaches the singularity 
in backward time-direction: a tumbling universe.

At least for Bianchi class-IX solutions the Bianchi attractor formed by the union of the 
Kasner circle and its heteroclinic orbits has been proven to indeed be a (global) attractor 
for trajectories to generic initial data under the time-reversed flow \cite{Ringstroem2001-BianchiIX}.

Rigorous results on the correspondence of the dynamics close to the attractor and the formal sequences of 
heteroclinic orbits to the Kasner circle are still missing. 
They would facilitate the discussion of the Belinskii-Khalatnikov-Lifshitz (BKL) conjecture of spatial decoupling close to singularities 
as well as the Misner hypothesis on the development of (spatially) homogeneous/isotropic space-times 
due to mixing near the initial singularity. See also \cite{HeinzleUggla2009-Mixmaster} for a recent survey.

In this paper we make the first step towards a rigorous description of the $\alpha$-limit 
dynamics of the Bianchi system. We describe the set of initial conditions near the Bianchi 
attractor that follow the (up to equivariance) unique period-3 heteroclinic cycle. 
In fact we prove that this set forms a codimension-one Lipschitz manifold,
see theorem \ref{thStableSet}.

We note that this result does not depend on monotonicity arguments and therefore covers 
both class-VIII and class-IX solutions approaching a period-3 heteroclinic cycle.

We will start by reviewing the Bianchi system in section \ref{secBianchiSystem}. 
in section \ref{secLocalMap} we study the passage near a line of equilibria in a generalized context.
This yields a local map between sections to the period-3 heteroclinic cycle near the Kasner circle.
In section \ref{secReturnMap} this local map is combined with the global excursion given by the heteroclinic cycle.
We obtain a return map with a fixed point representing the heteroclinic cycle.
The stable manifold of this fixed point, 
i.e.~the set of all solutions converging to the fixed point under iterations of the return map, 
represents all solutions following the heteroclinic cycle in the Bianchi system.
We construct this (local) stable manifold as the limit object of a graph transformation.
This yields the claimed Lipschitz set of trajectories with $\alpha$-limit dynamics 
following the period-3 heteroclinic cycle, theorem \ref{thStableSet}. 
Finally, we discuss generalizations to this result as well as open problems in section \ref{secDiscussion}.

Generalizations include heteroclinic cycles of arbitrary period as well as non-periodic heteroclinic 
sequences that do not approach the singular Taub points. 
Matter models between dust and radiation can be included.

Shortly after the submission of this article two more results on the same problem have appeared.
While B{\'e}guin \cite{Beguin2010-BianchiAsymptotics} shows existence of solutions to non-periodic trajectories of the Kasner map 
that remain bounded away from the Taub points like in our treatment, Reiterer $\&$ Trubowitz \cite{ReitererTrubowitz2010-BKL}
construct solutions of the full system near trajectories of the Kasner map that pass arbitrarily close 
to the Taub points. 
However, their analysis seems to be restricted to the vacuum case.


\section{The Bianchi model}
\label{secBianchiSystem}

We consider the Bianchi class-A model 
on $(N_1,N_2,N_3,\Sigma_+,\Sigma_-)\in\setR^5$ 
with time-derivative $'=d/d\tau$.
In terms of the spatial curvature variables $N_i$ and the shear variables $\Sigma_\pm$ it reads
\begin{equation}\label{eq5dimBianchi}
\begin{array}{rcl}
   N_1'      &=& ( q - 4 \Sigma_+                     ) N_1,
\\ N_2'      &=& ( q + 2 \Sigma_+ + 2\sqrt{3}\Sigma_- ) N_2,
\\ N_3'      &=& ( q + 2 \Sigma_+ - 2\sqrt{3}\Sigma_- ) N_3,
\\ \Sigma_+' &=& -(2-q) \Sigma_+ - 3 S_+,
\\ \Sigma_-' &=& -(2-q) \Sigma_- - 3 S_-.
\end{array}
\end{equation}
The abbreviations
\begin{equation}\label{eq5dimSymbols}
\begin{array}{rcl}
   q         &=& 2 \left( \Sigma_+^2 + \Sigma_-^2 \right)
                 + \frac{1}{2}(3\gamma-2)\Omega,
\\ \Omega    &=& 1 - \Sigma_+^2 - \Sigma_-^2
                   - K,
\\ K         &=& \frac{3}{4} \left( \rule[2ex]{0pt}{0pt} N_1^2 + N_2^2 + N_3^2
                                    - 2\left( N_1 N_2 + N_2 N_3 + N_3 N_1 \right) 
                             \right),
\\ S_+       &=& \frac{1}{2}\left( 
                   \left( N_2 - N_3 \right)^2
                   - N_1 \left( 2 N_1 - N_2 - N_3 \right) 
                 \right),
\\ S_-       &=& \frac{1}{2}\sqrt{3} 
                 \left( N_3 - N_2 \right) \left( N_1 - N_2 - N_3 \right).
\end{array}
\end{equation}
include the deceleration parameter $q$, the density parameter $\Omega$, and the curvature parameter $K$.
The fixed parameter $\frac{2}{3} < \gamma \le 2$, given by the equation of state of an ideal fluid, 
describes the uniformly distributed matter. 
For example, a value $\gamma=1$ represents dust, whereas $\gamma=4/3$ represents radiation.

For a derivation of these equations from the Einstein field equations see \cite{WainwrightHsu1989-BianchiA}, \cite{WainwrightEllis2005-Cosmology},
the appendix to \cite{Ringstroem2001-BianchiIX}, or \cite{Rinstroem2009-CauchyGeneralRelativity}.
\footnote{Note that \cite{WainwrightEllis2005-Cosmology} uses a slightly different scaling from \cite{WainwrightHsu1989-BianchiA} and \cite{Ringstroem2001-BianchiIX}, $N_i^{[WE]} = 3N_i^{[WH]}$ with all other variables being the same. We use the scaling of \cite{WainwrightHsu1989-BianchiA}.}

The resulting flow in $\Omega$ yields
\begin{equation}
\begin{array}{rcl}
\Omega' &=& \left( 2q - (3\gamma-2) \right) \Omega.
\end{array}
\end{equation}
The invariant set $\{\Omega=0\}$ corresponds to 
the 4-dimensional vacuum model
\begin{equation}\label{eq4dimBianchiVacuum}
\begin{array}{rcl}
   N_1'      &=& 2( 1 - K - 2 \Sigma_+                  ) N_1,
\\ N_2'      &=& 2( 1 - K + \Sigma_+ + \sqrt{3}\Sigma_- ) N_2,
\\ N_3'      &=& 2( 1 - K + \Sigma_+ - \sqrt{3}\Sigma_- ) N_3,
\\ \Sigma_+' &=& -2 K \Sigma_+ - 3 S_+,
\\ \Sigma_-' &=& -2 K \Sigma_- - 3 S_-, \qquad \mbox{resp.} \quad 
   \Sigma_-  \;=\; \pm\sqrt{1-K-\Sigma_+^2}.
\end{array}
\end{equation}

Symmetries are given by permutations of $\{N_1,N_2,N_3\}$
together with appropriate linear transformation of $\Sigma_+,\Sigma_-$ corresponding to a
representation of $S_3$ on $\setR^2$.
Together with the reflection $(N_1,N_2,N_3)\mapsto(-N_1,-N_2,-N_3)$,
the system yields a $S_3 \times \setZ_2$ symmetry group.

Note the classification of restrictions of the dynamical system 
to the various invariant regions, see table \ref{tabBianchiClasses}.

\begin{table}
\centering
\begin{tabular}{|c|ccc|}
\hline
Bianchi Class & $N_1$ & $N_2$ & $N_3$ \\
\hline
I       & $0$&$0$&$0$ \\
II      & $+$&$0$&$0$ \\
VI$_0$  & $0$&$+$&$-$ \\
VII$_0$ & $0$&$+$&$+$ \\
VIII    & $-$&$+$&$+$ \\
IX      & $+$&$+$&$+$ \\
\hline
\end{tabular}
\caption{\label{tabBianchiClasses}
Bianchi classes given by the signs of the spatial curvature variables $N_i$. 
Remaining cases are related by equivariance.}
\end{table}


The Kasner circle $\mathcal{K} = \{ N_1=N_2=N_3=0,\; \Omega=0 \}$, 
Bianchi class I,  consists of equilibria.
The attached half ellipsoids $\mathcal{H}_k = \{ N_k \ne 0, \; N_l=N_m=0, \; \Omega=0 \}$, $\{k,l,m\}=\{1,2,3\}$, 
Bianchi class II, 
consist of heteroclinic orbits to equilibria on the Kasner circle, 
see figure \ref{figKasnerCaps}. 
The projections of the trajectories of Bianchi class-II vacuum solutions onto the $\Sigma_\pm$-plane 
yield straight lines through the point $(\Sigma_+,\Sigma_-) = (2,0)$ in the cap $\{N_1 \ne 0, \; N_2=N_3=0\}$.
The projections of the other caps are given by the equivariance.

\begin{figure}
\centering
\setlength{\unitlength}{0.5\textwidth}
\begin{picture}(0.9,0.7)(-0.45,-0.15)
\put(-0.5,0.65){\makebox(0,0)[tl]{\includegraphics[width=\unitlength]{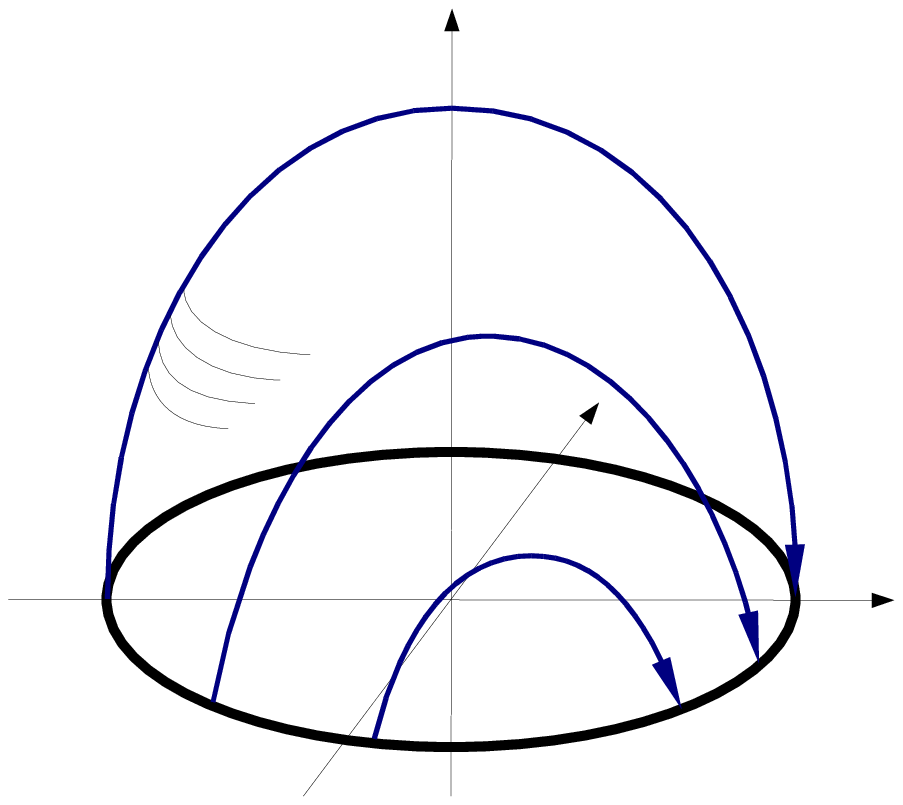}}}
\put(0.47,-0.02){\makebox(0,0)[tr]{$\Sigma_+$}}
\put(0.15,0.17){\makebox(0,0)[br]{$\Sigma_-$}}
\put(0.01,0.6){\makebox(0,0)[tl]{$N_1$}}
\put(0.2,-0.15){\makebox(0,0)[tl]{$\mathcal{K}$}}
\put(-0.23,0.27){\makebox(0,0)[bl]{$\mathcal{H}_1^+$}}
\end{picture}
\hspace*{\fill}
\setlength{\unitlength}{0.01125\textwidth}
\begin{picture}(40,35)(-15,-17.5)
\linethickness{0.4pt}
\put(-15,  0){\vector(1,0){40}}\put(26,-1){\makebox(0,0)[rt]{$\Sigma_+$}}
\put(  0,-20){\vector(0,1){40}}\put(0, 20){\makebox(0,0)[rt]{$\Sigma_-$}}
\put(0,0){\linethickness{1.5pt}\circle{20}}
\put(-10, 17.32){\line(  0,  -1){34.64}}
\put(-10, 17.32){\line(750,-433){30}}
\put(-10,-17.32){\line(750, 433){30}}
\linethickness{0.1pt}
\put(-10, 13.84){\line(750,-346){30}}
\put(-10, 10.40){\line(750,-260){30}}
\put(-10,  6.92){\line(750,-173){30}}
\put(-10,  3.48){\line(750, -87){30}}
\put(-10, -3.48){\line(750,  87){30}}
\put(-10, -6.92){\line(750, 173){30}}
\put(-10,-10.40){\line(750, 260){30}}
\put(-10,-13.84){\line(750, 346){30}}
\linethickness{0.4pt}
\put(-10  ,-17.5){\line(0,-1){ 1.0}}\put(-10  ,-20){\makebox(0,0)[b]{$-1$}}
\put( 20  , -0.2){\line(0,-1){18.3}}\put( 20  ,-20){\makebox(0,0)[b]{$+2$}}
\put( -9.8, 17.32){\line(1, 0){ 9.6}}\put(0, 17.32){\makebox(0,0)[l]{$\sqrt{3}$}}
\put(6.5,-8.5){\makebox(0,0)[tl]{$\mathcal{K}$}}
\put(-2,2){\makebox(0,0)[br]{$\mathcal{H}_1$}}
\end{picture}
\caption{\label{figKasnerCaps}
  Heteroclinic caps of vacuum Bianchi II solutions to the Kasner circle.}
\end{figure}

Away from the singular points, $T_k$, $k=1,2,3$, 
the Kasner circle $\mathcal{K}$ is normally hyperbolic 
with 2-dimensional center-stable manifold 
given by the family of incoming heteroclinic orbits.

The Kasner map $\Phi:\mathcal{K}\to\mathcal{K}$ is defined as follows: 
for each point $q_+\in\mathcal{K}\setminus\{T_1,T_2,T_3\}$ there exists 
a Bianchi class-II vacuum heteroclinic orbit $q(t)$ converging to $q_+$ as $t\to\infty$. 
This orbit is unique up to reflection $(N_1,N_2,N_3)\mapsto(-N_1,-N_2,-N_3)$. 
Its unique $\alpha$-limit $q_-$ defines the image of $q_+$ under the Kasner map
\begin{equation}\label{eqKasnerMap}
  \Phi(q_+) := q_-
\end{equation}
Including the three fixed points, $\Phi(T_k):=T_k$, this construction yields a continuous map,
$\Phi:\mathcal{K}\to\mathcal{K}$. 
In fact $\Phi$ is a non-uniformly expanding map and its image $\Phi(\mathcal{K})$ is a double cover of $\mathcal{K}$.

The main goal are rigorous results on the correspondence of iterations of the Kasner map $\Phi$ to the
dynamics of nearby trajectories to the Bianchi system (\ref{eq5dimBianchi}) with reversed time, 
i.e.~in the $\alpha$-limit $t\to -\infty$.

There exists a 3-cycle of heteroclinic orbits, i.e.~a fixed point of $\Phi^3$, unique up to equivariance.
In the following we choose the cycle given by a heteroclinic orbit $q(t)$ in the 
Bianchi class-II vacuum cap $\{N_1>0, N_2=N_3=0\}$ with $\Sigma_->0$ 
and its images under the equivariances given by cyclic permutation of $N_i$ 
and rotation by $2\pi/3$ in $(\Sigma_+,\Sigma_-)$, see figure \ref{figKasnerCycle}.
The $\alpha$-limit of $q(t)$ is given by
\begin{equation}\label{eq3CycleKasnerPoint}
  (\Sigma_+,\Sigma_-) = q_- = \frac{1}{8}\left( 1-3\sqrt{5}, \sqrt{3}+\sqrt{15} \right).
\end{equation}
After factoring out the equivariance, the heteroclinic orbit $q(t)$ becomes a homoclinic orbit 
on the orbit space of the equivariance. 
We will therefore discuss the dynamics of nearby trajectories by studying the return map $\Psi$
to a transverse cross section to $h(t)$. This is done in section \ref{secReturnMap}.

\begin{figure}
\centering
\setlength{\unitlength}{0.02\textwidth}
\begin{picture}(40,40)(-15,-20)
\linethickness{0.4pt}
\put(-15,  0){\vector(1,0){40}}\put(25,-0.5){\makebox(0,0)[rt]{$\Sigma_+$}}
\put(  0,-20){\vector(0,1){40}}\put(0.3, 20){\makebox(0,0)[lt]{$\Sigma_-$}}
\put(0,0){\linethickness{2pt}\circle{20}}
\put(-10, 17.32){\line(  0,  -1){34.64}}
\put(-10, 17.32){\line(750,-433){30}}
\put(-10,-17.32){\line(750, 433){30}}
\put(-10, 0   ){\circle*{0.5}}\put(-10.3,-0.5){\makebox(0,0)[rt]{$T_1$}}
\put(  5, 8.66){\circle*{0.5}}\put(  5.3, 8.8){\makebox(0,0)[lb]{$T_2$}}
\put(  5,-8.66){\circle*{0.5}}\put(  5.3,-9  ){\makebox(0,0)[lt]{$T_3$}}
\linethickness{0.1pt}
\put(-10, 13.84){\line(750,-346){30}}
\put(-10, 10.40){\line(750,-260){30}}
\put(-10,  6.92){\line(750,-173){30}}
\put(-10,  3.48){\line(750, -87){30}}
\put(-10, -3.48){\line(750,  87){30}}
\put(-10, -6.92){\line(750, 173){30}}
\put(-10,-10.40){\line(750, 260){30}}
\put(-10,-13.84){\line(750, 346){30}}
\put(-10,-17.32){\line(675, 476){27}}
\put(-10,-17.32){\line(600, 520){24}}
\put(-10,-17.32){\line(525, 563){21}}
\put(-10,-17.32){\line(450, 606){18}}
\put(-10,-17.32){\line(375, 649){15}}
\put(-10,-17.32){\line(300, 693){12}}
\put(-10,-17.32){\line(225, 736){ 9}}
\put(-10,-17.32){\line(150, 779){ 6}}
\put(-10,-17.32){\line( 75, 823){ 3}}
\put(-10, 17.32){\line(675,-476){27}}
\put(-10, 17.32){\line(600,-520){24}}
\put(-10, 17.32){\line(525,-563){21}}
\put(-10, 17.32){\line(450,-606){18}}
\put(-10, 17.32){\line(375,-649){15}}
\put(-10, 17.32){\line(300,-693){12}}
\put(-10, 17.32){\line(225,-736){ 9}}
\put(-10, 17.32){\line(150,-779){ 6}}
\put(-10, 17.32){\line( 75,-823){ 3}}
\linethickness{0.4pt}
\put( -2.5,-10  ){\line(0,-1){ 8.5}}\put( -2.5,-20){\makebox(0,0)[b]{
  $-\frac{1}{4}$}}
\put(-10  ,-17.5){\line(0,-1){ 1.0}}\put(-10  ,-20){\makebox(0,0)[b]{$-1$}}
\put( 20  , -0.2){\line(0,-1){18.3}}\put( 20  ,-20){\makebox(0,0)[b]{$+2$}}
\put( -9.8, 17.32){\line(1, 0){ 9.6}}\put(  0.2, 17.32){\makebox(0,0)[l]{
  $\sqrt{3}$}}
%
%
\linethickness{1pt}
\put(-7.13525, 7.00629){\line( 984,-254){16.77050}}
\put( 9.63525, 2.67617){\line(-978,-996){12.13525}}
\put(-2.5    ,-9.68246){\line(-275, 990){ 4.63525}}
\put(-7.13525, 7.00629){\vector( 984,-254){8.38525}}
\put( 9.63525, 2.67617){\vector(-978,-996){6.06762}}
\put(-2.5    ,-9.68246){\vector(-275, 990){2.31762}}
\put(1,5){\makebox(0,0)[bl]{\boldmath$q(t)$}}
\put(-7,7){\makebox(0,0)[br]{\boldmath$q_-$}}
\put(10,2.5){\makebox(0,0)[bl]{\boldmath$q_+$}}
\end{picture}
\caption{\label{figKasnerCycle}
  Kasner circle with sketch of the phase portrait 
  of all Bianchi-II-ellipsoids 
  and the resulting heteroclinic 3-cycle.}
\end{figure}

The linearization of (\ref{eq5dimBianchi}) at the Kasner circle yields
\begin{equation}\label{eqKasnerLinearization}
\left(\begin{array}{ccccc}
2 - 4 \Sigma_+ \hspace{-1em}& 0 & 0 & 0 & 0 \\
0 & 2 + 2 \Sigma_+ + 2\sqrt{3} \Sigma_- \hspace{-2em}& 0 & 0 & 0 \\
0 & 0 & 2 + 2 \Sigma_+ - 2\sqrt{3} \Sigma_- \hspace{-2em}& 0 & 0 \\
0 & 0 & 0 & 3(2-\gamma)\Sigma_+^2 & 3(2-\gamma)\Sigma_+\Sigma_- \\
0 & 0 & 0 & 3(2-\gamma)\Sigma_+\Sigma_- & 3(2-\gamma)\Sigma_-^2
\end{array}\right).
\end{equation}
We find eigenvalues
\[
\begin{array}{rcl}
\mu_1 &=& 2-4\Sigma_+,\\
\mu_2 &=& 2 + 2 \Sigma_+ + 2\sqrt{3}\Sigma_-, \\
\mu_3 &=& 2 + 2 \Sigma_+ - 2\sqrt{3}\Sigma_-
\end{array}
\]
to eigenvectors $\partial_{N_1}$, $\partial_{N_2}$, $\partial_{N_3}$ tangential 
to the Bianchi class-II vacuum heteroclinics.
Additionally, there is the trivial eigenvalue zero to the eigenvector 
$-\Sigma_-\partial_{\Sigma_+} + \Sigma_+\partial_{\Sigma_-}$ 
tangential to the Kasner circle $\mathcal{K}$.
The fifth eigenvalue $\mu_\Omega = 3(2-\gamma) > 0$ corresponds to the eigenvector 
$\Sigma_+\partial_{\Sigma_+} + \Sigma_-\partial_{\Sigma_-}$ transverse to the vacuum boundary 
$\{\Omega=0\}$.

At the Kasner equilibrium $q_-$ of the 3-cycle (\ref{eq3CycleKasnerPoint}) we have
\[
\mu_1 = \frac{3}{2}(1+\sqrt{5}), \qquad \mu_2 = 3, \qquad \mu_3 = \frac{3}{2}(1-\sqrt{5})
\]
Note that both unstable eigenvalues are stronger than the stable one, $0 < -\mu_3 < \mu_2 < \mu_1$, 
and that the heteroclinic orbit belonging to the 3-cycle is tangent to the strong unstable direction $\partial_{N_1}$.

In fact at every point on $\mathcal{K}\setminus\{T_1,T_2,T_3\}$ there is one negative eigenvalue 
and it is weaker than the other two positive eigenvalues among $\mu_1, \mu_2, \mu_3$. 
At the singular Taub points $\{T_1,T_2,T_3\}$ two eigenvalues change their signs simultaneously.

We therefore study the local passage near such nonsingular points in general systems in the following section, 
before combining it with the global excursion map 
and applying the results to the particular Bianchi system in section \ref{secReturnMap}.


\section{Local map}
\label{secLocalMap}

In this section we study the passage of trajectories under 
a general flow near a line of equilibria with eigenvalue constraints (\ref{eqGeneralEigenvalueRelation}) 
consistent with the Kasner circle in the Bianchi system.
We will collect estimates on expansion and contraction rates
to establish Lipschitz properties of the local map between sections to a reference orbit 
given by the passage near the line of equilibria, see theorem \ref{thLipschitzLocalMap} at the end of this section.

Consider a $\mathcal{C}^k$ vector field, $k\ge 4$,
\begin{equation}\label{eqGeneralLocalVectorField}
x' = f(x), \qquad x\in\setR^4,
\end{equation}
near a point (w.l.o.g. the origin) on an equilibrium line
\begin{equation}\label{eqGeneralEqLine}
f \left((0,0,0,x_c)^\mathrm{T}\right) \equiv 0, \qquad \forall x_c \in \setR.
\end{equation}
Assume that the linearization at the origin (and then by continuity locally all along the line) 
has the form
\begin{equation}\label{eqGeneralCrudeLinearization}
A(x_c) = Df \left((0,0,0,x_c)^\mathrm{T}\right) = 
\left(\begin{array}{cccc}
\mu_u(x_c) & 0 & 0 & 0 \\
0 & -\mu_s(x_c) & 0 & 0 \\
0 & 0 & -\mu_{ss}(x_c) & 0 \\
\ast & \ast & \ast & 0 
\end{array}\right)
\end{equation}
with
\begin{equation}\label{eqGeneralEigenvalueRelation}
0 < \mu_u < \mu_s < \mu_{ss}.
\end{equation}
We denote $x=(x_u,x_s,x_{ss},x_c)^\mathrm{T}$ according to the above splitting. 
We also abbreviate the vector of stable components as $x_{s,ss}=(x_s,x_{ss})^\mathrm{T}$.

The aim is to study a local map from an in-section $\Sigma^\mathrm{in}=\{x_{ss}=\varepsilon\}$ to an 
out-section $\Sigma^\mathrm{out}=\{x_u=\varepsilon\}$ for $x_c, x_s \approx 0$, see figure \ref{figLocalMap}.
This corresponds to the passage near the Kasner circle in the Bianchi system 
in backwards time direction. (We reversed the time direction to obtain a well defined local map.)

\begin{figure}
\setlength{\unitlength}{0.01125\textwidth}
\centering
\begin{picture}(40,30)(-5,-5)
\thicklines
\put(  0,  0){\line(1,0){12}}\put( 20,  0){\line(1,0){15}}\put(35,-0.5){\makebox(0,0)[rt]{$x_{ss}$}}
\put(  0,  0){\line(0,1){16}}\put(  0, 20){\line(0,1){5}}\put(-0.5, 25){\makebox(0,0)[rt]{$x_u$}}
\put(-5,-2.5){\line(2,1){40}}\put(35,17.5){\makebox(0,0)[rb]{$x_s$}}
\put( -5, -5){\thinlines\line(1,1){30}}\put(25,25){\makebox(0,0)[lt]{$x_c$}}
\put(  5,  0){\vector(-1,0){1}}
\put(  6,  0){\vector(-1,0){1}}
\put(  7,  0){\vector(-1,0){1}}
\put(  5,  2.5){\vector(-2,-1){1}}
\put(  6,  3){\vector(-2,-1){1}}
\put( -5, -2.5){\vector(2,1){1}}
\put( -6, -3){\vector(2,1){1}}
\put(  0,  6){\vector(0,1){1}}
\thinlines
\put(20,0){%
\put( -8, -4){\line(2,1){16}}
\put(  8,  4){\line(0,1){8}}
\put( -8, -4){\line(0,1){8}}
\put( -8,  4){\line(2,1){16}}
\put(  8,  8){\makebox(0,0)[l]{$\,\Sigma^\mathrm{in}$}}
}
\thinlines
\put(0,20){%
\put( -8, -4){\line(2,1){16}}
\put(  8,  4){\line(1,0){10}}
\put( -8, -4){\line(1,0){10}}
\put(  2, -4){\line(2,1){16}}
\put( 13,  4.5){\makebox(0,0)[b]{$\Sigma^\mathrm{out}$}}
}
\thicklines
\put(25,5){\line(1,0){10}}
\put(4,21){\line(0,1){4}}
\qbezier(14,5)(4,5)(4,17)
\put(5,10.7){\vector(-2,5){0.1}}
\put(25,5){\circle*{0.5}}
\put(4,21){\circle*{0.5}}
\put(25,5){\makebox(0,0)[br]{$x^\mathrm{in}$}}
\put(4,21){\makebox(0,0)[tl]{$x^\mathrm{out}$}}
\end{picture}
\caption{\label{figLocalMap}Local passage $\Psi^\mathrm{loc}:\Sigma^\mathrm{in}\to\Sigma^\mathrm{out}$.}
\end{figure}

Rescaling coordinates yields a system
\begin{equation}\label{eqGeneralLocalVectorFieldLinNonlin}
x' = A(x_c)x + \varepsilon g(x),
\end{equation}
with $\varepsilon$ arbitrarily fixed and $g$ quadratic in $(x_u, x_s, x_{ss})$.
The local map 
\begin{equation}\label{eqLocalMap}
  (x_u^\mathrm{in}, x_s^\mathrm{in}, x_c^\mathrm{in}) \longmapsto 
  (x_s^\mathrm{out}, x_{ss}^\mathrm{out}, x_c^\mathrm{out}) = 
    \Psi^\mathrm{loc} (x_u^\mathrm{in}, x_s^\mathrm{in}, x_c^\mathrm{in})
\end{equation}
is given by the first intersection of the solution of (\ref{eqGeneralLocalVectorFieldLinNonlin}) 
to the initial value $(x_u^\mathrm{in}, x_s^\mathrm{in}, x_{ss}^\mathrm{in}=1, x_c^\mathrm{in})$
with the out-section $\{x_u=1\}$. 
The singular points $x^\mathrm{in}$ in the intersection of the stable manifold of the equilibrium line with the in-section
are mapped to the respective points in the intersection of the unstable manifold of the equilibrium line with the out-section.
Time at the in-section is usually set to zero. 
Time at the out-section depends on the initial condition and is denoted by $t^\mathrm{loc}$.

For any $x_c$ fixed and any $k\in\setN$, the equilibrium $(0,0,0,x_c)^\mathrm{T}$ possesses a
one-dimensional unstable manifold $W^u(x_c)$ and a stable manifold $W^{s,ss}(x_c)$. 

Since the stable and unstable manifolds as well as the strong stable foliation of the stable manifold
are $\mathcal{C}^k$ they can be flattened, see e.g. \cite{ShilnikovTuraevChua1998-MethodsNonlinearDynamicsI}, Theorem 5.8.
By a $\mathcal{C}^k$ change of coordinate the stable / strong stable / unstable manifolds to the 
equilibria locally coincide with the respective eigenspaces, in particular the following subspaces become invariant:
\begin{equation}\label{eqGeneralInvManifolds}
\begin{array}{rcl}
W^u(x_c) &=& \{x_s = x_{ss} = 0, x_c \mbox{ fixed} \}, \\
W^{s,ss}(x_c) &=& \{x_u = 0, x_c \mbox{ fixed} \}, \\
W^{ss}(x_c) &=& \{x_u = x_s = 0, x_c \mbox{ fixed} \}.
\end{array}
\end{equation}
In fact, in the Bianchi system, $W^u(x_c)$ and $W^{ss}(x_c)$ coincide with the class-II caps formed by families of incoming and outgoing 
heteroclinic orbits.

The local map $\Psi^\mathrm{loc}$ is well-defined on the in-section
\begin{equation}\label{eqInSection}
\Sigma^\mathrm{in} = 
  \{\,(x_u^\mathrm{in}, x_s^\mathrm{in}, x_{ss}^\mathrm{in}, x_c^\mathrm{in})\,|\,
  x_{ss}^\mathrm{in} = 1, 0 < x_u^\mathrm{in} < 1, |x_s^\mathrm{in}| < 1, |x_c^\mathrm{in}| < 1\,\}
\end{equation}
see lemma \ref{thLocalPassage} below.
The singular points of the local map thus become the set $\{x_u^\mathrm{in}=0\}$ and we define:
\begin{equation}\label{eqLocalMapExtension}
\Psi^\mathrm{loc}(x_u^\mathrm{in}=0, x_s^\mathrm{in}, x_c^\mathrm{in}) \;=\; 
  (x_s^\mathrm{out}, x_{ss}^\mathrm{out}, x_c^\mathrm{out}) \,:=\;
  (0,0,x_c^\mathrm{in}).
\end{equation}

Let $P_u$, $P_s$, $P_{ss}$, $P_{s,ss}:= P_s+P_{ss}$, $P_c$ be the eigenprojections with respect to $A(x_c)$. 
Then (\ref{eqGeneralCrudeLinearization}) yields
\begin{equation}\label{eqGeneralLinearization}
A(x_c) = Df \left((0,0,0,x_c)^\mathrm{T}\right) = 
\left(\begin{array}{cccc}
\mu_u(x_c) & 0 & 0 & 0 \\
0 & -\mu_s(x_c) & 0 & 0 \\
0 & 0 & -\mu_{ss}(x_c) & 0 \\
0 & 0 & 0 & 0 
\end{array}\right)
\end{equation}
and due to (\ref{eqGeneralInvManifolds}) the higher order terms have the form
\begin{equation}\label{eqGeneralHot}
\begin{array}{rclcll}
P_u g(x) &=& x_u \tilde{g}_u(x) &&&\in\setR,\\
P_{s,ss} g(x) &=& \tilde{g}_{s,ss}(x) x_{s,ss} 
  &=& \left(\begin{array}{cc} \tilde{g}_{s1}(x) & \tilde{g}_{s2}(x) \\ \tilde{g}_{ss1}(x) & \tilde{g}_{ss2}(x) \end{array}\right) 
      \left({ x_s \atop x_{ss} }\right) 
  &\in\setR^2, \\
P_c g(x) &=& x_u \tilde{g}_{c}(x) x_{s,ss} 
  &=& x_u \, \left( \begin{array}{cc} \tilde{g}_{c1}(x) & \tilde{g}_{c2}(x) \end{array} \right) 
      \left({ x_s \atop x_{ss} }\right) 
  &\in\setR,
\end{array}
\end{equation}
with $\mathcal{C}^{k-1}$-functions $\tilde{g}_u$, $\tilde{g}_{s,ss}$, vanishing along the line of equilibria, and 
$\mathcal{C}^{k-2}$-function $\tilde{g}_{c}$.
In particular
\begin{equation}\label{eqGeneralHotBounds}
\begin{array}{rcl}
|\tilde{g}_{c1}(x)|, |\tilde{g}_{c2}(x)|& < & C, \\|\tilde{g}_{u}(x)|, |\tilde{g}_{s1}(x)|, |\tilde{g}_{s2}(x)|, |\tilde{g}_{ss1}(x)|, |\tilde{g}_{ss2}(x)| & < & C \max(|x_u|,|x_s|,|x_{ss}|)
\end{array}
\end{equation}
for some constant $C>0$ independent of $\varepsilon$ and $x\in\mathcal{U}$, 
where $\mathcal{U}$ is some local neighborhood of the origin. 
Similarly, $\tilde{g}$ satisfies Lipschitz bounds
\begin{equation}\label{eqGeneralHotLipschitzBounds}
|\tilde{g}_c(x)-\tilde{g}_c(\tilde{x})|, \quad
|\tilde{g}_u(x)-\tilde{g}_u(\tilde{x})|, \quad
\|\tilde{g}_{s,ss}(x)-\tilde{g}_{s,ss}(\tilde{x})\| < C \|x-\tilde{x}\|.
\end{equation}
%
Norms of vectors are always taken as $\ell_1$-norms, e.g. 
\[
\|x\|=|x_c|+|x_u|+\|x_{s,ss}\| = |x_c|+|x_u|+|x_s|+|x_{ss}|.
\]
We choose
\begin{equation}\label{eqLocalNeighborhood}
\mathcal{U}=(-2,2)^4.
\end{equation}
All further estimates will use this rescaled system (\ref{eqGeneralLocalVectorFieldLinNonlin}) 
with flattened invariant manifolds (\ref{eqGeneralInvManifolds}) in the local neighborhood $\mathcal{U}$.
They will be valid for all $\varepsilon<\varepsilon_0$ and suitably chosen $\varepsilon_0$. 
In the original system (\ref{eqGeneralLocalVectorField}) $\varepsilon_0$ bounds the size of the neighborhood 
of the origin in which this local analysis is valid.

\begin{rem} The invariance of $W^{ss}(x_c)$ implies that $\tilde{g}_{s2}(x_u=0,x_s,x_{ss},x_c)=0$. We do however
not exploit this fact in our analysis below.
\end{rem}

\begin{prop}\label{thEigenvalueBounds}
Let
\[
\mu_u := \mu_u(0), \quad -\mu_s := -\mu_s(0), \quad -\mu_{ss} := -\mu_{ss}(0)
\]
be the eigenvalues of (\ref{eqGeneralLinearization}) at the origin.
Then for all $0<\alpha<1$ there exists an $\varepsilon_0>0$ such that 
for all $\varepsilon<\varepsilon_0$ in (\ref{eqGeneralLocalVectorFieldLinNonlin}) and $x\in\mathcal{U}$
\[
\alpha \;\le\; \frac{\mu_u(x_c)}{\mu_u}, \frac{\mu_s(x_c)}{\mu_s}, \frac{\mu_{ss}(x_c)}{\mu_{ss}} \;\le\; \alpha^{-1}.
\]
\end{prop}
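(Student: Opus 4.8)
The plan is a direct continuity argument; the only genuine content is keeping track of the rescaling that produced (\ref{eqGeneralLocalVectorFieldLinNonlin}).

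First I would record that, by (\ref{eqGeneralCrudeLinearization}), the matrix $A(x_c)$ is lower triangular, so its eigenvalues are precisely the diagonal entries $\mu_u(x_c)$, $-\mu_s(x_c)$, $-\mu_{ss}(x_c)$, $0$; in particular the three functions $\mu_u(\cdot)$, $\mu_s(\cdot)$, $\mu_{ss}(\cdot)$ are restrictions of entries of the Jacobian $Df$ to the equilibrium line, hence $\mathcal{C}^{k-1}$ --- in any case continuous --- in $x_c$. By (\ref{eqGeneralEigenvalueRelation}) their values at $x_c=0$, namely $\mu_u$, $\mu_s$, $\mu_{ss}$, are all strictly positive, so each of the ratios $\mu_u(x_c)/\mu_u$, $\mu_s(x_c)/\mu_s$, $\mu_{ss}(x_c)/\mu_{ss}$ is a continuous function of $x_c$ equal to $1$ at $x_c=0$.

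Next, fix $0<\alpha<1$. Since $1\in(\alpha,\alpha^{-1})$, continuity of $x_c\mapsto\mu_u(x_c)/\mu_u$ at $0$ yields a $\delta_u>0$ with $\alpha\le\mu_u(x_c)/\mu_u\le\alpha^{-1}$ for $|x_c|\le\delta_u$, and likewise one obtains $\delta_s,\delta_{ss}>0$ for the other two ratios. Setting $\delta:=\min(\delta_u,\delta_s,\delta_{ss})>0$, all three ratio bounds hold simultaneously whenever $|x_c|\le\delta$.

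Finally I would unwind the rescaling. The coordinates in (\ref{eqGeneralLocalVectorFieldLinNonlin}) arise from those of (\ref{eqGeneralLocalVectorField}) by the dilation $x\mapsto\varepsilon x$, so a point with center component $x_c$ in the rescaled picture lies at center component $\varepsilon x_c$ in the original one, and the symbols ``$\mu_u(x_c),\mu_s(x_c),\mu_{ss}(x_c)$'' occurring in the rescaled system are the original eigenvalue functions evaluated at $\varepsilon x_c$. Hence, choosing $\varepsilon_0:=\delta/2$, for every $\varepsilon<\varepsilon_0$ and every $x\in\mathcal{U}=(-2,2)^4$ (so $|x_c|<2$) one has $|\varepsilon x_c|<2\varepsilon_0=\delta$, and the bounds of the previous step apply with $x_c$ replaced by $\varepsilon x_c$, giving exactly the asserted estimate for all such $\varepsilon$ and $x$. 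There is no real obstacle; the one place to be attentive is precisely this last substitution, since ignoring the $\varepsilon$-dilation would wrongly suggest the statement demands uniform control of $\mu_\bullet(\cdot)$ on the fixed interval $(-2,2)$ rather than on the shrinking interval $(-2\varepsilon,2\varepsilon)$, and it is the latter that makes $\alpha\to1$ attainable as $\varepsilon\to0$.
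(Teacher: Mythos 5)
Your argument is correct and rests on the same essential observation as the paper's: the rescaling $x\mapsto\varepsilon x$ compresses the $x_c$-argument of the original eigenvalue functions to the interval $(-2\varepsilon,2\varepsilon)$, so the variation of $\mu_\bullet(\cdot)$ over $\mathcal{U}$ vanishes as $\varepsilon\to0$ while the values at the origin stay fixed at $\mu_\bullet>0$. The identification of $\mu_u(x_c),\mu_s(x_c),\mu_{ss}(x_c)$ with diagonal entries of the triangular $Df$, hence $\mathcal{C}^{k-1}$ (and you only need continuity), is also right and slightly cleaner than invoking eigenvalue simplicity.

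The one way your route diverges from the paper's is that you argue by pure continuity, extracting a $\delta$ from the definition of continuity and setting $\varepsilon_0=\delta/2$, whereas the paper uses differentiability to derive the quantitative bound $\bigl|\tfrac{\diff}{\diff x_c}\mu_\bullet(x_c)\bigr|<\varepsilon C$ (equation (\ref{eqEigenvalueGradients})) on the rescaled eigenvalue functions and then lets $\varepsilon_0$ drive that to zero. For Proposition \ref{thEigenvalueBounds} itself both are adequate, but the derivative estimate (\ref{eqEigenvalueGradients}) is not a throwaway: it is re-cited in the proof of Theorem \ref{thLipschitzLocalMap} as part of the Lipschitz control on $\mu_{s,ss}(\cdot)$. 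So your proof establishes the proposition as stated, but a reader following your version would still need to supply (\ref{eqEigenvalueGradients}) separately before tackling the Lipschitz estimates of Section \ref{secLocalMap}. If you instead bound the derivative, you get the ratio bounds and the Lipschitz constant in one stroke.
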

\begin{proof}
The distinct eigenvalues in the linearization of the original system (\ref{eqGeneralCrudeLinearization})
depend differentiably on $x_c$, as long as (\ref{eqGeneralEigenvalueRelation}) holds. 
For the rescaled system (\ref{eqGeneralLocalVectorFieldLinNonlin}) with small $\varepsilon_0$ this provides 
bounds in $\mathcal{U}$: 
Indeed, there exists a constant $C>0$ independent of $\varepsilon_0$, $\varepsilon$, such that
\begin{equation}\label{eqEigenvalueGradients}
\textstyle
\left| \frac{\diff}{\diff x_c}\mu_{u}(x_c) \right|,\;
\left| \frac{\diff}{\diff x_c}\mu_{s}(x_c) \right|,\;
\left| \frac{\diff}{\diff x_c}\mu_{ss}(x_c) \right| \;<\; \varepsilon C.
\end{equation}
\end{proof}

The scalar function $\theta(x):= \mu_u \left(\mu_u(x_c) + \varepsilon\tilde{g}_u(x)\right)^{-1}$ is therefore $\mathcal{C}^{k-1}$ and close to $1$.
The vector field 
\[
  x' \;=\; \theta(x)f(x) \;=\; \frac{\mu_u}{\mu_u(x_c) + \varepsilon\tilde{g}_u(x)} f(x)
\]
has the same trajectories as the original vector field and all previous considerations remain valid. 
Thus we can assume, without loss of generality, that $\theta(x) \equiv 1$ in $\mathcal{U}$, i.e.
\begin{equation}\label{eqEulerMultiplier}
 \mu_u(x_c) \;\equiv\; \mu_u, \qquad \tilde{g}_u(x) \equiv 0.
\end{equation}
At this step we have made use of the fact that the origin possesses exactly one unstable eigenvalue. 
The vector field to consider then has the form
\begin{equation}\label{eqSplitLocalSystem}
\begin{array}{rcl}
x_u' &=& \mu_u \, x_u,
\\
x_{s,ss}' &=& - \mu_{s,ss}(x_c) \, x_{s,ss} + \varepsilon \, \tilde{g}_{s,ss}(x) \, x_{s,ss},
\\
x_c' &=& \varepsilon \, x_u \, \tilde{g}_{c}(x) \, x_{s,ss}.
\end{array}
\end{equation}
Note again the abbreviations
\[\begin{array}{rclrcl}
x_{s,ss} &=& \left(\begin{array}{c} x_s \\ x_{ss} \end{array}\right), 
& \qquad
\mu_{s,ss}(x_c) &=& \left(\begin{array}{cc} \mu_s(x_c) & 0 \\ 0 & \mu_{ss}(x_c) \end{array}\right), 
\\
\tilde{g}_{s,ss}(x) &=& \left(\begin{array}{cc} \tilde{g}_{s1}(x) & \tilde{g}_{s2}(x) \\ \tilde{g}_{ss1}(x) & \tilde{g}_{ss2}(x) \end{array}\right),
& \quad
\tilde{g}_c(x) &=& \left( \begin{array}{cc} \tilde{g}_{c1}(x) & \tilde{g}_{c2}(x) \end{array} \right),
\end{array}\]
to facilitate later generalizations to higher dimensions of the stable component $x_{s,ss}$. 
In particular note the diagonal form of the linear part $\mu_{s,ss}$.

\begin{lem}\label{thMonotoneDistance}
Define 
\[
\chi \,:=\; |x_u|\,\|x_{s,ss}\| \,:=\; |x_u|\,|x_s| + |x_u|\,|x_{ss}|
\]
Then for all $0<\alpha<1$ there exists an $\varepsilon_0>0$ such that 
for all $\varepsilon<\varepsilon_0$ and $x\in\mathcal{U}$
\[
\chi' \;\le\; -\alpha(\mu_s-\mu_u) \chi
\]
under the dynamics (\ref{eqSplitLocalSystem}).
In particular, with $x(0) = x^\mathrm{in}$ on the in-section $\Sigma^\mathrm{in}$
as long as the trajectory remains inside $\mathcal{U}$
\[
\chi(t) \;\le\; \exp(-\alpha(\mu_s-\mu_u) t)\,|x_u^\mathrm{in}|\,\|x_{s,ss}^\mathrm{in}\| \;\le\; 2\exp(-\alpha(\mu_s-\mu_u) t)\,|x_u^\mathrm{in}|.
\]
\end{lem}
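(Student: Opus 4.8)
The plan is to compute $\chi'$ directly using the product structure $\chi = |x_u|\,|x_s| + |x_u|\,|x_{ss}|$ and the split system (\ref{eqSplitLocalSystem}), then bound each term. Since $x_u' = \mu_u x_u$ exactly, the factor $|x_u|$ contributes a clean $+\mu_u$ to the logarithmic derivative of each product. For the $x_s$ and $x_{ss}$ factors I would use $\frac{\diff}{\diff t}|x_s| = \mathrm{sgn}(x_s)\, x_s' = -\mu_s(x_c)|x_s| + \varepsilon\,\mathrm{sgn}(x_s)(\tilde g_{s1}x_s + \tilde g_{s2}x_{ss})$, and similarly for $|x_{ss}|$. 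Assembling, $\chi' = |x_u|\big(\mu_u - \mu_s(x_c)\big)|x_s| + |x_u|\big(\mu_u - \mu_{ss}(x_c)\big)|x_{ss}| + \varepsilon R$, where $R$ collects all the $\tilde g$-terms. Using the bounds (\ref{eqGeneralHotBounds}), i.e. $|\tilde g_{s1}|,|\tilde g_{s2}|,|\tilde g_{ss1}|,|\tilde g_{ss2}| < C\max(|x_u|,|x_s|,|x_{ss}|)$, and the fact that on $\mathcal U = (-2,2)^4$ all coordinates are bounded by $2$, one gets $|R| \le C'\,|x_u|\,\|x_{s,ss}\| = C'\chi$ for a constant $C'$ independent of $\varepsilon$ (this is where the quadratic-in-$(x_u,x_s,x_{ss})$ nature of $g$ is essential — each $\tilde g$-term carries an extra small factor, keeping $R$ of order $\chi$ rather than order $\|x_{s,ss}\|$).

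Next I would invoke Proposition \ref{thEigenvalueBounds} to replace $\mu_s(x_c)$ and $\mu_{ss}(x_c)$ by quantities close to $\mu_s$ and $\mu_{ss}$: given any target $\alpha<1$, choose an intermediate $\alpha'$ with $\alpha < \alpha' < 1$, pick $\varepsilon_0$ small enough that $\mu_s(x_c) \ge \alpha' \mu_s$ and $\mu_{ss}(x_c) \ge \alpha'\mu_{ss}$ throughout $\mathcal U$. Since $0 < \mu_u < \mu_s < \mu_{ss}$, both bracketed coefficients are negative: $\mu_u - \mu_s(x_c) \le \mu_u - \alpha'\mu_s \le -\alpha'(\mu_s - \mu_u)$ after possibly shrinking $\alpha'$ once more (note $\mu_u - \alpha'\mu_s = -\alpha'(\mu_s-\mu_u) - (1-\alpha')\mu_u \le -\alpha'(\mu_s-\mu_u)$, so in fact no further shrinking is needed), and likewise $\mu_u - \mu_{ss}(x_c) \le -\alpha'(\mu_{ss}-\mu_u) \le -\alpha'(\mu_s-\mu_u)$ since $\mu_{ss} > \mu_s$. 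Hence the linear part of $\chi'$ is bounded above by $-\alpha'(\mu_s-\mu_u)\chi$, giving $\chi' \le \big(-\alpha'(\mu_s-\mu_u) + \varepsilon C'\big)\chi$. Shrinking $\varepsilon_0$ a final time so that $\varepsilon C' \le (\alpha'-\alpha)(\mu_s-\mu_u)$ yields $\chi' \le -\alpha(\mu_s-\mu_u)\chi$ as claimed.

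The integrated statement then follows by Gronwall: as long as the trajectory stays in $\mathcal U$, $\chi(t) \le e^{-\alpha(\mu_s-\mu_u)t}\chi(0)$, and on the in-section $\Sigma^\mathrm{in}$ we have $x_{ss}^\mathrm{in} = 1$, so $\chi(0) = |x_u^\mathrm{in}|\,\|x_{s,ss}^\mathrm{in}\| = |x_u^\mathrm{in}|(|x_s^\mathrm{in}| + 1) \le 2|x_u^\mathrm{in}|$ because $|x_s^\mathrm{in}| < 1$. I expect the only real subtlety to be bookkeeping: one must be careful that the constant $C'$ in the remainder bound genuinely does not depend on $\varepsilon$ (it comes only from $C$ in (\ref{eqGeneralHotBounds}) and from the diameter of $\mathcal U$), so that the chain of shrinkings of $\varepsilon_0$ terminates. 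The differentiability of $|x_s|$ fails only on the measure-zero set $\{x_s = 0\}$; there one either works with $\frac{\diff}{\diff t}\sqrt{x_s^2 + \delta^2}$ and lets $\delta \to 0$, or observes that $\chi$ is locally Lipschitz and the differential inequality holds in the Dini sense, which suffices for Gronwall. This is the kind of point worth a one-line remark but not a genuine obstacle.
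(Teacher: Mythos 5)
Your proposal follows the paper's proof essentially line by line: differentiate $\chi$ using the split form (\ref{eqSplitLocalSystem}), bound the nonlinear remainder by the uniform bound on $\tilde{g}|_{\mathcal{U}}$ from (\ref{eqGeneralHotBounds}), invoke Proposition~\ref{thEigenvalueBounds} to compare $\mu_s(x_c)$ with $\mu_s$, shrink $\varepsilon_0$, and integrate. One algebraic slip worth fixing: the identity $\mu_u - \alpha'\mu_s = -\alpha'(\mu_s-\mu_u) - (1-\alpha')\mu_u$ has the wrong sign --- correctly $\mu_u - \alpha'\mu_s = -\alpha'(\mu_s-\mu_u) + (1-\alpha')\mu_u$, which is strictly \emph{larger} than $-\alpha'(\mu_s-\mu_u)$ --- so the parenthetical ``no further shrinking is needed'' is false, but your hedge ``after possibly shrinking $\alpha'$'' already rescues the argument, since the excess $(1-\alpha')\mu_u$ vanishes as $\alpha'\to 1$ and $\mu_u<\mu_s$ leaves room to choose $\alpha'\in(\alpha+(1-\alpha)\mu_u/\mu_s,\,1)$.
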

\begin{rem}
The quantity $\chi$ describes the ``distance'' from the critical heteroclinic cycle. 
More precisely, $\chi$ is related to the distance from the set $\bigcup_{x_c} \left(W^{s,ss}(x_c)\cup W^u(x_c)\right)$.
Close to the critical heteroclinic cycle this set is however under the return map
squeezed onto the critical cycle, so for large times $\chi$ measures in fact a ``distance''
to the critical heteroclinic cycle.
\end{rem}
\begin{proof}
Vectorfield (\ref{eqSplitLocalSystem}) yields
\[
\chi' \;\le\;
-(\mu_s(x_c)-\mu_u)\chi + 2 \varepsilon \|\tilde{g}_{s,ss}(x)\|\chi.
\]
The function $\tilde{g}|_\mathcal{U}$ is bounded uniformly in $\varepsilon$, see (\ref{eqGeneralHotBounds}). 
Additionally, proposition \ref{thEigenvalueBounds} provides bounds $-\mu_s(x_c) \le -\tilde{\alpha}\mu_s$ 
for arbitrary $0 < \tilde{\alpha} < 1$ and $\varepsilon_0$ small enough. 
By a suitable choice of $\tilde{\alpha}$ between $\alpha$ and 1 we estimate
\[
\chi' \;\le\; -\alpha(\mu_s-\mu_u) \chi.
\]
Integrating this differential inequality and using the initial value 
yield the second claim.
\end{proof}

We need estimates for the passage time $t^\mathrm{loc}$ as well as bounds on 
$|x_c|$, $|x_u|$, and $\|x_{s,ss}\|$. 
These are obtained in the following lemmata and summarized in corollary \ref{thContinuousLocalMap}.

\begin{lem}\label{thLocalBounds}
For all $0<\alpha<1$ there exists an $\varepsilon_0>0$ such that 
for all $\varepsilon<\varepsilon_0$ and $x(0)=x^\mathrm{in}$ in the in-section $\Sigma^\mathrm{in}$, see (\ref{eqInSection}),
as long as $x(t)$ remains in $\mathcal{U}$ under the flow to the vector field
(\ref{eqSplitLocalSystem})
\begin{eqnarray}\label{eqLocalUnstableBound}
x_u(t) &=& \exp(\mu_u t) \, x_u^\mathrm{in},
\\\label{eqLocalStableBound} \hspace{-2em}
\|x_{s,ss}(t)\| \;=\; |x_s(t)| + |x_{ss}(t)| &\le& \exp(-\alpha \mu_s t) \, \|x_{s,ss}^\mathrm{in}\|,
\\\label{eqLocalCenterBound}
|x_c(t) - x_c^\mathrm{in}| &\le& \frac{2 \varepsilon C}{\alpha(\mu_s-\mu_u)} |x_u^\mathrm{in}|.
\end{eqnarray}
Here, $C$ is the uniform (in $x$ and $\varepsilon$) bound on $\tilde{g}|_\mathcal{U}$ from (\ref{eqGeneralHotBounds}).
\end{lem}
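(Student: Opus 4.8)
The three estimates are to be read off directly from the decoupled structure of the rescaled system (\ref{eqSplitLocalSystem}). The first one, (\ref{eqLocalUnstableBound}), is immediate: the $x_u$-equation is the genuinely linear scalar equation $x_u'=\mu_u x_u$ (this is exactly what the Euler multiplier normalization (\ref{eqEulerMultiplier}) bought us), so $x_u(t)=\exp(\mu_u t)\,x_u^\mathrm{in}$ with no error term at all. In particular $x_u(t)$ stays positive and is monotonically increasing on $\Sigma^\mathrm{in}$, which is what determines $t^\mathrm{loc}$.

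For (\ref{eqLocalStableBound}) I would work componentwise but uniformly: from the $x_{s,ss}$-equation, $\frac{d}{dt}\|x_{s,ss}\| \le -\mu_{s}(x_c)\|x_{s,ss}\| + \varepsilon\|\tilde g_{s,ss}(x)\|\,\|x_{s,ss}\|$, using that $\mu_{s,ss}(x_c)$ is diagonal with entries $\mu_s(x_c)\le\mu_{ss}(x_c)$ so the slowest-contracting direction dominates, and $-\mu_s(x_c)$ is the least negative exponent. Now invoke Proposition \ref{thEigenvalueBounds} to replace $\mu_s(x_c)$ by $\tilde\alpha\mu_s$ for a $\tilde\alpha$ strictly between $\alpha$ and $1$, and (\ref{eqGeneralHotBounds}) to bound $\varepsilon\|\tilde g_{s,ss}(x)\|$ by $\varepsilon\cdot 2C$ on $\mathcal{U}$; choosing $\varepsilon_0$ small enough that $2\varepsilon_0 C < (\tilde\alpha-\alpha)\mu_s$ absorbs the perturbation and yields $\frac{d}{dt}\|x_{s,ss}\|\le-\alpha\mu_s\|x_{s,ss}\|$, whence the Grönwall estimate. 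This is the same mechanism as in the proof of Lemma \ref{thMonotoneDistance}, just applied to $\|x_{s,ss}\|$ alone rather than to $\chi$.

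The estimate (\ref{eqLocalCenterBound}) on the drift of the center coordinate is where the work is, and it is where Lemma \ref{thMonotoneDistance} does its job. From the $x_c$-equation, $|x_c'| \le \varepsilon\,|x_u|\,\|\tilde g_c(x)\|\,\|x_{s,ss}\| \le \varepsilon C\,|x_u|\,\|x_{s,ss}\| = \varepsilon C\,\chi$, using the uniform bound $\|\tilde g_c\|<C$ from (\ref{eqGeneralHotBounds}). Integrating from $0$ to $t$ and plugging in the exponential decay of $\chi$ from Lemma \ref{thMonotoneDistance},
\[
|x_c(t)-x_c^\mathrm{in}| \;\le\; \varepsilon C \int_0^t \chi(\sigma)\,\diff\sigma
\;\le\; \varepsilon C \int_0^\infty 2\exp(-\alpha(\mu_s-\mu_u)\sigma)\,|x_u^\mathrm{in}|\,\diff\sigma
\;=\; \frac{2\varepsilon C}{\alpha(\mu_s-\mu_u)}\,|x_u^\mathrm{in}|,
\]
which is exactly the claimed bound. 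The key point making this finite is that $\chi$ decays exponentially \emph{for all time the trajectory stays in $\mathcal{U}$}, with rate $\alpha(\mu_s-\mu_u)>0$ guaranteed by the eigenvalue ordering $\mu_u<\mu_s$ in (\ref{eqGeneralEigenvalueRelation}); the product structure $x_c' \sim x_u \cdot x_{s,ss}$ forced by the flattening of the invariant manifolds (\ref{eqGeneralInvManifolds})--(\ref{eqGeneralHot}) is what produces precisely the quantity $\chi$ rather than something that only decays like $x_{s,ss}$ alone. The main obstacle, such as it is, is bookkeeping: one must choose a single $\varepsilon_0$ that simultaneously makes Proposition \ref{thEigenvalueBounds}, Lemma \ref{thMonotoneDistance}, and the absorption of the $\varepsilon\tilde g_{s,ss}$ term all hold, which is routine since each requires only $\varepsilon_0$ small and there are finitely many conditions.
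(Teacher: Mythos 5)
Your proposal is correct and follows essentially the same route as the paper's proof: read off $x_u$ from the exact linear equation made possible by the Euler multiplier (\ref{eqEulerMultiplier}), absorb the $\varepsilon\tilde g_{s,ss}$ perturbation into the eigenvalue gap via Proposition \ref{thEigenvalueBounds} and Gr\"onwall for $\|x_{s,ss}\|$, and integrate the exponential decay of $\chi$ from Lemma \ref{thMonotoneDistance} to bound the center drift. You also correctly emphasize the structural point that the product form $x_c'\sim x_u\,x_{s,ss}$ from (\ref{eqGeneralHot}) is what makes the drift bounded by $|x_u^{\mathrm{in}}|$ rather than merely small.
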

\begin{proof}
The unstable component (\ref{eqLocalUnstableBound}) is given directly by the vectorfield.

On the stable component, we obtain the estimate
\[
\textstyle \frac{\diff}{\diff t} \left( |x_s| + |x_{ss}| \right) \;\le\;
  \left( -\mu_s(x_c) + 2 \varepsilon \|\tilde{g}_{s,ss}(x)\| \right)
  \,
  \left( |x_s| + |x_{ss}| \right).
\]
The function $\tilde{g}|_\mathcal{U}$ is bounded uniformly in $\varepsilon$, see (\ref{eqGeneralHotBounds}).
Together with the estimate on the eigenvalues of proposition \ref{thEigenvalueBounds}, 
this yields the bounds (\ref{eqLocalStableBound}) on the stable component.

The center component of (\ref{eqSplitLocalSystem}) is estimated by 
\[
\begin{array}{rclcl} 
\displaystyle |x_c(t) - x_c^\mathrm{in}| 
   &=&   \displaystyle \varepsilon \left|\int_0^t P_cg(x(s)) \diff s\right|
   &\le& \displaystyle \varepsilon \int_0^t |P_cg(x(s))| \diff s
\\ &\le& \displaystyle \varepsilon C \int_0^t |\chi(s)| \diff s
   &\le& \displaystyle \varepsilon C \frac{2}{\alpha(\mu_s-\mu_u)} |x_u^\mathrm{in}|.
\end{array}
\]
Here we used the estimate of lemma \ref{thMonotoneDistance} on $\chi$.
\end{proof}

\begin{cor}\label{thLocalPassage}
There exists an $\varepsilon_0>0$ such that 
for all $\varepsilon<\varepsilon_0$ and $x(0)=x^\mathrm{in}$ in the in-section $\Sigma^\mathrm{in}$, see (\ref{eqInSection}),
the trajectory $x(t)$ under the flow to the vector field (\ref{eqSplitLocalSystem})
remains in $\mathcal{U}$ as long as $|x_u| \le 1$, 
i.e.~all along the passage defining the local map $\Psi^\mathrm{loc}$. 
The passage time $t^\mathrm{loc}$ is given by
\[
 t^\mathrm{loc} \;=\; \frac{1}{\mu_u} \ln \frac{1}{|x_u^\mathrm{in}|}.
\]
\end{cor}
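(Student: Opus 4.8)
The plan is a standard trapping-region (bootstrap) argument. All the a priori estimates collected so far --- Proposition \ref{thEigenvalueBounds} and Lemmata \ref{thMonotoneDistance}, \ref{thLocalBounds} --- hold only \emph{as long as} the trajectory stays in $\mathcal{U}=(-2,2)^4$, so the first task is to turn these conditional bounds into an unconditional one. Fix a parameter $\alpha\in(0,1)$ and let $\varepsilon_0>0$ be small enough for all of those statements; shrink $\varepsilon_0$ further, if necessary, so that $\frac{2\varepsilon_0 C}{\alpha(\mu_s-\mu_u)}<1$, where $C$ is the uniform bound on $\tilde g|_\mathcal{U}$. Given $x^\mathrm{in}\in\Sigma^\mathrm{in}$, let $x(t)$ be the corresponding solution of (\ref{eqSplitLocalSystem}) with $x(0)=x^\mathrm{in}$, and set $t^\mathrm{loc}:=\frac{1}{\mu_u}\ln\frac{1}{x_u^\mathrm{in}}$. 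Since $x_u(t)=\exp(\mu_u t)\,x_u^\mathrm{in}$ is strictly increasing, the condition $|x_u(t)|\le1$ is exactly $t\le t^\mathrm{loc}$, so it suffices to show that $x(t)\in\mathcal{U}$ for all $t\in[0,t^\mathrm{loc}]$.

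Define $t_*:=\sup\{\,T\ge0\ :\ x(t)\in\mathcal{U}\ \text{for all}\ t\in[0,T)\,\}$. Since $x^\mathrm{in}\in\mathcal{U}$ (recall $0<x_u^\mathrm{in}<1$, $x_{ss}^\mathrm{in}=1$, $|x_s^\mathrm{in}|<1$, $|x_c^\mathrm{in}|<1$) we have $t_*>0$, and on $[0,t_*)$ the estimates of Lemma \ref{thLocalBounds} apply. Suppose, for contradiction, that $t_*\le t^\mathrm{loc}$. Then $|x_u(t)|\le1$ on $[0,t_*]$, so by continuity the bounds of Lemma \ref{thLocalBounds} extend to $t=t_*$, giving
\[
  |x_u(t_*)|\le \exp(\mu_u t_*)\,x_u^\mathrm{in}\le1,\qquad
  |x_s(t_*)|,\ |x_{ss}(t_*)|\le\|x_{s,ss}(t_*)\|\le\|x_{s,ss}^\mathrm{in}\|=1+|x_s^\mathrm{in}|<2,
\]
\[
  |x_c(t_*)|\le|x_c^\mathrm{in}|+\frac{2\varepsilon C}{\alpha(\mu_s-\mu_u)}\,x_u^\mathrm{in}<1+\frac{2\varepsilon_0 C}{\alpha(\mu_s-\mu_u)}<2.
\]
Hence every component of $x(t_*)$ has modulus strictly less than $2$, i.e.\ $x(t_*)\in\mathcal{U}$, contradicting $x(t_*)\in\partial\mathcal{U}$, which must hold at the escape time $t_*$ by openness of $\mathcal{U}$ and continuity of the flow. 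Therefore $t_*>t^\mathrm{loc}$, and $x(t)$ remains in $\mathcal{U}$ throughout the passage.

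With this, the explicit formula $x_u(t)=\exp(\mu_u t)\,x_u^\mathrm{in}$ from Lemma \ref{thLocalBounds} is valid on $[0,t^\mathrm{loc}]$, and the first crossing of the out-section $\{x_u=1\}$ occurs precisely when $\exp(\mu_u t)\,x_u^\mathrm{in}=1$, i.e.\ at $t^\mathrm{loc}=\frac{1}{\mu_u}\ln\frac{1}{x_u^\mathrm{in}}=\frac{1}{\mu_u}\ln\frac{1}{|x_u^\mathrm{in}|}$, using $x_u^\mathrm{in}>0$ on $\Sigma^\mathrm{in}$. The only delicate point is this logical ordering: because Lemma \ref{thLocalBounds} is itself conditional on $x(t)\in\mathcal{U}$, one cannot invoke it directly but must first run the escape-time argument above; once that is in place, there is no further analytical obstacle.
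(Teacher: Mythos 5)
Your proof is correct and takes essentially the same approach as the paper's: the paper also chooses $\varepsilon_0<\frac{1}{2C}\alpha(\mu_s-\mu_u)$ so that the center drift stays below $1$, invokes the bounds of Lemma \ref{thLocalBounds} to conclude the trajectory cannot reach $\partial\mathcal{U}$ before $x_u$ exceeds $1$, and reads off $t^\mathrm{loc}$ from the explicit formula for $x_u(t)$. You merely spell out the bootstrap (escape-time) argument that the paper leaves implicit.
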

\begin{proof}
We choose $\varepsilon_0$ smaller than $\frac{1}{2C}\alpha(\mu_s-\mu_u)$, 
see lemma \ref{thLocalBounds}.
Then trajectories starting in $\Sigma^\mathrm{in}$ cannot leave $\mathcal{U}$ unless $x_u$ becomes larger than 1,
see (\ref{eqLocalCenterBound}), (\ref{eqLocalStableBound}).
Furthermore, (\ref{eqLocalUnstableBound}) ensures that $x_u$ must grow beyond 1.
Thus every trajectory starting in $\Sigma^\mathrm{in}$ intersects the out-section $\{x_u=1\}$ 
before leaving $\mathcal{U}$.
Setting $x_u(t^\mathrm{loc}) = 1$ in (\ref{eqLocalUnstableBound}) determines the passage time $t^\mathrm{loc}$.
\end{proof}

\begin{cor}\label{thContinuousLocalMap}
The local map $\Psi^\mathrm{loc}$ (\ref{eqLocalMap}, \ref{eqLocalMapExtension}), 
i.e.~the local passage on the closed in-section $\overline\Sigma^\mathrm{in}$ including the singular line $\{x_u^\mathrm{in} = 0\}$, 
is continuous.
For all $0<\alpha<1$ there exists an $\varepsilon_0>0$ such that 
for all $\varepsilon<\varepsilon_0$ the following estimates hold
\[\begin{array}{rcl}
  \displaystyle |x_c^\mathrm{out} - x_c^\mathrm{in}| 
    &\le& \displaystyle 
     \frac{2 \varepsilon C}{\alpha(\mu_s-\mu_u)} |x_u^\mathrm{in}|, \\
  \displaystyle \|x_{s,ss}^\mathrm{out}\| \;=\; |x_s^\mathrm{out}| + |x_{ss}^\mathrm{out}| 
    &\le& \displaystyle 
     |x_u^\mathrm{in}|^{\alpha\mu_s/\mu_u - 1} |x_u^\mathrm{in}| \,\|x_{s,ss}^\mathrm{in}\|
    \;\le\;
     |x_u^\mathrm{in}|^{\alpha\mu_s/\mu_u - 1} ( |x_u^\mathrm{in}| + |x_s^\mathrm{in}| ).
\end{array}\] 
Thus the drift along the line of equilibria is arbitrarily small and the distance from the orbit to
the union of the stable and unstable manifolds shrinks arbitrarily fast, close to the critical orbit.
\end{cor}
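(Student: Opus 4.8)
The plan is to read off the two displayed estimates from the interior bounds of Lemma~\ref{thLocalBounds}, evaluated at the passage time $t^\mathrm{loc}=\mu_u^{-1}\ln|x_u^\mathrm{in}|^{-1}$ supplied by Corollary~\ref{thLocalPassage}, and to prove continuity of $\Psi^\mathrm{loc}$ in two pieces: on the open in-section $\{x_u^\mathrm{in}>0\}$ from smooth dependence of solutions on initial data, and across the singular line $\{x_u^\mathrm{in}=0\}$ directly from these estimates.

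First I would fix $0<\alpha<1$ and choose $\varepsilon_0>0$ small enough that Lemma~\ref{thLocalBounds} (with this $\alpha$) and Corollary~\ref{thLocalPassage} both apply. For $x^\mathrm{in}\in\Sigma^\mathrm{in}$ and $\varepsilon<\varepsilon_0$, Corollary~\ref{thLocalPassage} keeps the trajectory in $\mathcal{U}$ for $0\le t\le t^\mathrm{loc}$ and has it exit through $\Sigma^\mathrm{out}$ at $t^\mathrm{loc}$, so $(x_s^\mathrm{out},x_{ss}^\mathrm{out},x_c^\mathrm{out})=(x_s(t^\mathrm{loc}),x_{ss}(t^\mathrm{loc}),x_c(t^\mathrm{loc}))$. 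The center estimate is then exactly (\ref{eqLocalCenterBound}) at $t=t^\mathrm{loc}$. For the stable estimate, (\ref{eqLocalStableBound}) at $t=t^\mathrm{loc}$ gives $\|x_{s,ss}^\mathrm{out}\|\le\exp(-\alpha\mu_s t^\mathrm{loc})\,\|x_{s,ss}^\mathrm{in}\|=|x_u^\mathrm{in}|^{\alpha\mu_s/\mu_u}\,\|x_{s,ss}^\mathrm{in}\|$; pulling out one factor $|x_u^\mathrm{in}|$ yields the first form, and substituting $x_{ss}^\mathrm{in}=1$ together with $|x_u^\mathrm{in}|<1$ (so $|x_u^\mathrm{in}|\,|x_s^\mathrm{in}|\le|x_s^\mathrm{in}|$) gives $|x_u^\mathrm{in}|\,\|x_{s,ss}^\mathrm{in}\|\le|x_u^\mathrm{in}|+|x_s^\mathrm{in}|$ and hence the second.

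For continuity on $\{x_u^\mathrm{in}>0\}$ I would note that the flow of (\ref{eqSplitLocalSystem}) depends continuously on $(t,x^\mathrm{in})$, that $t^\mathrm{loc}=\mu_u^{-1}\ln|x_u^\mathrm{in}|^{-1}$ is a smooth function of $x_u^\mathrm{in}\in(0,1)$, and that $\Psi^\mathrm{loc}(x^\mathrm{in})$ is the composition of $x^\mathrm{in}\mapsto x(t^\mathrm{loc};x^\mathrm{in})$ with the coordinate projection onto $(x_s,x_{ss},x_c)$; a composition of continuous maps is continuous. For continuity across the singular line, choose for this step an $\alpha$ with $\mu_u/\mu_s<\alpha<1$ — possible since $\mu_u<\mu_s$ — together with the corresponding $\varepsilon_0$, so that $\alpha\mu_s/\mu_u-1>0$. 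If $x^\mathrm{in}\in\Sigma^\mathrm{in}$ tends to a point $(0,x_s^0,x_c^0)$ of the singular line, the center bound gives $|x_c^\mathrm{out}-x_c^\mathrm{in}|\le\frac{2\varepsilon C}{\alpha(\mu_s-\mu_u)}|x_u^\mathrm{in}|\to0$, so $x_c^\mathrm{out}\to x_c^0$, while the stable bound gives $\|x_{s,ss}^\mathrm{out}\|\le|x_u^\mathrm{in}|^{\alpha\mu_s/\mu_u-1}(|x_u^\mathrm{in}|+|x_s^\mathrm{in}|)\to0$ since the exponent is positive and $|x_u^\mathrm{in}|+|x_s^\mathrm{in}|$ stays bounded on $\Sigma^\mathrm{in}$. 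Hence $\Psi^\mathrm{loc}(x^\mathrm{in})\to(0,0,x_c^0)$, which is exactly the value assigned by the extension (\ref{eqLocalMapExtension}); together with the trivial continuity along the singular line itself this gives continuity on all of $\overline\Sigma^\mathrm{in}$. The closing remarks then follow by inspection: the factor $\varepsilon$ in the center bound makes the drift along the equilibrium line arbitrarily small, and letting $\alpha\uparrow1$ raises the exponent $\alpha\mu_s/\mu_u-1$ towards $\mu_s/\mu_u-1>0$, so — recalling the remark after Lemma~\ref{thMonotoneDistance} that $\|x_{s,ss}\|$ at $\Sigma^\mathrm{out}$ measures the distance to $\bigcup_{x_c}\bigl(W^{s,ss}(x_c)\cup W^u(x_c)\bigr)$ — this distance shrinks arbitrarily fast near the critical orbit.

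The one genuinely delicate point, and the step I expect to need the most care, is the passage to the limit across the singular line: there $t^\mathrm{loc}\to\infty$, so one must check that the exponential stable contraction outruns the diverging passage time, which is precisely what the positivity of $\alpha\mu_s/\mu_u-1$ for $\alpha$ near $1$ records. Everything else is routine bookkeeping of the estimates already in hand.
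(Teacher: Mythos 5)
Your proof is correct and follows essentially the same route as the paper's (which is stated in one line): both read the two estimates directly from Lemma~\ref{thLocalBounds} at the passage time of Corollary~\ref{thLocalPassage}, use $x_{ss}^\mathrm{in}=1$ and $|x_u^\mathrm{in}|<1$ for the second form of the stable bound, and obtain continuity at the singular line from the positivity of the exponent $\alpha\mu_s/\mu_u-1$ for $\alpha$ near $1$. You merely spell out the routine steps (smooth dependence on the open part, the explicit choice $\alpha>\mu_u/\mu_s$) that the paper leaves implicit.
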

\begin{proof}
The estimates follow directly from lemma \ref{thLocalBounds}
applied to the local passage time given by corollary \ref{thLocalPassage}.
They also establish continuity of the local map $\Psi^\mathrm{loc}$ at the singular line $\{ x_u^\mathrm{in} = 0 \}$.
Note $0 < x_u^\mathrm{in} < 1$ and $x_{ss}^\mathrm{in} = 1$ on the in-section.
Note further that $\beta = \alpha\mu_s/\mu_u - 1 > 0$ for $\alpha$ chosen close enough to $1$.
\end{proof}

To obtain Lipschitz-bounds for the local map $\Psi^\mathrm{loc}$, we have to improve the estimates considerably. 
Consider two trajectories, $x(t)$ and $\tilde{x}(t)$, starting in the in-section $\Sigma^\mathrm{in}$. 
If $\tilde{x}_u^\mathrm{in}=0$ then $\tilde{x}_{s,ss}^\mathrm{out}=0$, $\tilde{x}_c^\mathrm{out}=\tilde{x}_c^\mathrm{in}$,
and the Lipschitz estimates follow from corollary \ref{thContinuousLocalMap}.
Choose, without loss of generality, 
\[
 0 < \tilde{x}_u^\mathrm{in} \le x_u^\mathrm{in}.
\]
In order to compare the two trajectories, we want $\tilde{x}_u(0) = x_u^\mathrm{in} = x_u(0)$ and therefore 
start the trajectory $\tilde{x}$ at negative time
\begin{equation}\label{eqTimeInitialSegment}
  \tilde{x}(-\tilde{t}^0) \;=\; \tilde{x}^\mathrm{in}, \qquad \tilde{t}^0\,:=\; \frac{1}{\mu_u} \ln \frac{x_u^\mathrm{in}}{\tilde{x}_u^\mathrm{in}}.
\end{equation}
Then $\tilde{x}_u(t) = x_u(t)$ for all $t\ge 0$. In particular, both trajectories meet the out-section at the same
time  $t^\mathrm{loc} = \frac{1}{\mu_u} \ln \frac{1}{|x_u^\mathrm{in}|}$, see corollary \ref{thLocalPassage}.
The next lemma gives some estimates for the initial segment of the trajectory $\tilde{x}$ up to $t=0$.

\begin{lem}\label{thLipEstimateInitialSegment}
Consider (\ref{eqSplitLocalSystem}) with initial values $x(0)=x^\mathrm{in}$ and 
$\tilde{x}(-\tilde{t}^0)=\tilde{x}^\mathrm{in}$ in $\Sigma^\mathrm{in}$. 
Assume without loss of generality $0 < \tilde{x}_u^\mathrm{in} \le x_u^\mathrm{in}$.
There exists an $\varepsilon_0>0$ such that for all $\varepsilon<\varepsilon_0$ the following estimates hold:
\[
\begin{array}{rcl}
| \tilde{x}_c(0) - \tilde{x}_c^\mathrm{in} | &\le& \displaystyle
\varepsilon L_c \, \frac{\tilde{x}_u^\mathrm{in}}{x_u^\mathrm{in}} | x_u^\mathrm{in} - \tilde{x}_u^\mathrm{in} |,
\\[2ex]
\| \tilde{x}_{s,ss}(0) - \tilde{x}_{s,ss}^\mathrm{in} \| \;=\;
| \tilde{x}_s(0) - \tilde{x}_s^\mathrm{in} | + | \tilde{x}_{ss}(0) - \tilde{x}_{ss}^\mathrm{in} | &\le & \displaystyle
  L_s \, \frac{1}{x_u^\mathrm{in}} | x_u^\mathrm{in} - \tilde{x}_u^\mathrm{in} |,
\end{array}
\]
with suitably chosen constants $L_s > 0$, $L_c > 0$, uniform in $x^\mathrm{in}$, $\tilde{x}^\mathrm{in}$ and $\varepsilon$.
\end{lem}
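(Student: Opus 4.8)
The plan is to integrate the vector field (\ref{eqSplitLocalSystem}) over the initial time-segment $[-\tilde t^0, 0]$ for the trajectory $\tilde x$, and to bound the increments of the center and stable components using the pointwise bounds on $\tilde g$ from (\ref{eqGeneralHotBounds}) together with the exponential growth/decay laws along $[-\tilde t^0,0]$. First I would record that along this segment the unstable component evolves by $\tilde x_u(t)=\exp(\mu_u t)\,x_u^\mathrm{in}$ after the time-shift, so that $\tilde x_u(t)$ runs from $\tilde x_u^\mathrm{in}$ at $t=-\tilde t^0$ up to $x_u^\mathrm{in}$ at $t=0$; in particular $|\tilde x_u(t)|\le x_u^\mathrm{in}$ and $\int_{-\tilde t^0}^0 |\tilde x_u(s)|\,\diff s = \frac{1}{\mu_u}(x_u^\mathrm{in}-\tilde x_u^\mathrm{in})$, since that integral is exactly $\frac{1}{\mu_u}(\tilde x_u(0)-\tilde x_u(-\tilde t^0))$. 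This single closed-form evaluation of the time-integral of $x_u$ is the mechanism that converts the length $\tilde t^0$ of the segment (which blows up logarithmically as $\tilde x_u^\mathrm{in}\to 0$) into the harmless factor $|x_u^\mathrm{in}-\tilde x_u^\mathrm{in}|/x_u^\mathrm{in}$ appearing in the claim.

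Next, for the center component, (\ref{eqSplitLocalSystem}) gives $\tilde x_c' = \varepsilon\,\tilde x_u\,\tilde g_c(\tilde x)\,\tilde x_{s,ss}$, hence
\[
|\tilde x_c(0)-\tilde x_c^\mathrm{in}| \;\le\; \varepsilon C \int_{-\tilde t^0}^0 |\tilde x_u(s)|\,\|\tilde x_{s,ss}(s)\|\,\diff s
\;=\; \varepsilon C \int_{-\tilde t^0}^0 \chi(s)\,\diff s .
\]
I would invoke Lemma \ref{thMonotoneDistance}: $\chi$ is monotonically decreasing along the flow (inside $\mathcal U$), so on $[-\tilde t^0,0]$ we have $\chi(s)\le \chi(-\tilde t^0)=|\tilde x_u^\mathrm{in}|\,\|\tilde x_{s,ss}^\mathrm{in}\|\le 2\tilde x_u^\mathrm{in}$ (using $|\tilde x_{ss}^\mathrm{in}|=1$, $|\tilde x_s^\mathrm{in}|<1$). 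But we must first check $\tilde x$ stays in $\mathcal U$ on this segment; this follows as in Corollary \ref{thLocalPassage}, since running the flow backward from $(0,0,0)$-th components small only shrinks $\|\tilde x_{s,ss}\|$-growth and keeps $|\tilde x_c|$ displacement $\le \varepsilon_0$-small. A sharper estimate is actually needed to get the stated $\frac{\tilde x_u^\mathrm{in}}{x_u^\mathrm{in}}|x_u^\mathrm{in}-\tilde x_u^\mathrm{in}|$: rather than bounding $\chi$ by its value at the left endpoint, use $\chi(s)\le |\tilde x_u(s)|\cdot\|\tilde x_{s,ss}(s)\|$ with $\|\tilde x_{s,ss}(s)\|\le \|\tilde x_{s,ss}^\mathrm{in}\|\le 2$ and $|\tilde x_u(s)|=\exp(\mu_u s)x_u^\mathrm{in}$, so $\int_{-\tilde t^0}^0 \chi(s)\diff s \le 2\int_{-\tilde t^0}^0 \exp(\mu_u s)x_u^\mathrm{in}\diff s = \frac{2}{\mu_u}(x_u^\mathrm{in}-\tilde x_u^\mathrm{in})$; combining the two bounds on $\chi$ — one proportional to $\tilde x_u^\mathrm{in}$, one to $(x_u^\mathrm{in}-\tilde x_u^\mathrm{in})$ — and taking a geometric-mean-type interpolation, or simply splitting the integral at the time where $\exp(\mu_u s)x_u^\mathrm{in}=\tilde x_u^\mathrm{in}$, yields a bound of the form $\varepsilon L_c\,\frac{\tilde x_u^\mathrm{in}}{x_u^\mathrm{in}}|x_u^\mathrm{in}-\tilde x_u^\mathrm{in}|$ with $L_c$ depending only on $C,\mu_u,\mu_s$.

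For the stable component, $\tilde x_{s,ss}' = -\mu_{s,ss}(\tilde x_c)\tilde x_{s,ss}+\varepsilon\tilde g_{s,ss}(\tilde x)\tilde x_{s,ss}$ is linear (non-autonomously) in $\tilde x_{s,ss}$, so $\|\tilde x_{s,ss}(0)-\tilde x_{s,ss}^\mathrm{in}\| = \|(\text{transition matrix over }[-\tilde t^0,0] - \id)\,\tilde x_{s,ss}^\mathrm{in}\|$; alternatively, and more directly, one integrates the componentwise $\ell_1$ differential inequality to get $\|\tilde x_{s,ss}(0)-\tilde x_{s,ss}^\mathrm{in}\| \le \int_{-\tilde t^0}^0 \big(\mu_{ss}(\tilde x_c)+\varepsilon\|\tilde g_{s,ss}\|\big)\|\tilde x_{s,ss}(s)\|\,\diff s$, and since $\|\tilde x_{s,ss}(s)\|\le \|\tilde x_{s,ss}^\mathrm{in}\|\le 2$ on the backward segment while the integrand's coefficient is $\BigO(1)$, the whole integral is $\BigO(\tilde t^0)$ — which is the wrong answer because $\tilde t^0$ diverges. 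The real point, and the main obstacle, is that the contraction rate $\mu_{ss}$ makes $\tilde x_{s,ss}$ \emph{larger} at $t=-\tilde t^0$ than near $t=0$, but only by a bounded amount relative to the $x_u$-spread; the correct bound must again be driven by $\int \exp(\mu_u s)\diff s$ rather than by $\tilde t^0$. So I would instead estimate $\tilde x_{s,ss}(s)$ against its value at the \emph{right} endpoint, $\|\tilde x_{s,ss}(s)\| \le \exp\big((\mu_{ss}+\varepsilon C)|s|\big)\|\tilde x_{s,ss}(0)\|$ for $s\in[-\tilde t^0,0]$, then relate $\|\tilde x_{s,ss}(0)\|$ back to $\|\tilde x_{s,ss}^\mathrm{in}\|$ via the reverse inequality, and bound the increment by $\int_{-\tilde t^0}^0 \mu_{ss}(\ldots)\|\tilde x_{s,ss}(s)\|\diff s$ where now $\|\tilde x_{s,ss}(s)\|$ is controlled by $\exp(-\alpha\mu_s s)$-type growth backward; crucially, pairing the exponential growth of $\|\tilde x_{s,ss}\|$ backward in time with the factor $\tilde x_u^\mathrm{in}/x_u^\mathrm{in} = \exp(-\mu_u\tilde t^0)$ and using $\mu_s,\mu_{ss}>\mu_u$ only indirectly — in fact here one uses that the \emph{stretch} $\exp(\mu_{ss}\tilde t^0)$ of $\|\tilde x_{s,ss}\|$ is compensated by the \emph{smallness} $\exp(-\mu_u\tilde t^0)$ of the shift, leaving a net $\exp(-(\mu_u-?)\tilde t^0)$ that is NOT obviously bounded. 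The resolution: one does not compare $\tilde x_{s,ss}(0)$ with $\tilde x_{s,ss}^\mathrm{in}$ directly through the divergent-length segment, but observes that $\|\tilde x_{s,ss}^\mathrm{in}\|\ge |\tilde x_{ss}^\mathrm{in}|=1$, so the \emph{relative} change $\|\tilde x_{s,ss}(0)\|/\|\tilde x_{s,ss}^\mathrm{in}\| \le \exp(-\alpha\mu_s\tilde t^0) = (\tilde x_u^\mathrm{in}/x_u^\mathrm{in})^{\alpha\mu_s/\mu_u}$ is tiny, whence $\|\tilde x_{s,ss}(0)-\tilde x_{s,ss}^\mathrm{in}\|$ is dominated by $\|\tilde x_{s,ss}^\mathrm{in}\| \le 1 + |\tilde x_s^\mathrm{in}|$ times something; and then one absorbs the $1/x_u^\mathrm{in}$ factor by noting $|x_u^\mathrm{in}-\tilde x_u^\mathrm{in}| \ge x_u^\mathrm{in}(1-\exp(-\mu_u\tilde t^0)) \ge c\,x_u^\mathrm{in}\min(1,\mu_u\tilde t^0)$, which for the trajectories genuinely contributing to the Lipschitz bound is bounded below proportionally to $x_u^\mathrm{in}$. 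I expect the bookkeeping here — juggling the three competing exponential rates $\mu_u<\mu_s<\mu_{ss}$ so that the divergent segment length $\tilde t^0$ never appears unmultiplied by a compensating $\exp(-\mu_u\tilde t^0)$ — to be the delicate and error-prone part; the center and unstable components are comparatively routine.
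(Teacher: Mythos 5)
There are two genuine gaps in the proposal, one for each displayed estimate, and both trace back to not exploiting the one elementary inequality the paper leans on: for $\xi\in(0,1]$, $\vartheta>0$ one has $1-\xi^\vartheta\le\max\{\vartheta,1\}(1-\xi)$, which converts any factor $1-\exp(-c\,\tilde t^0)=1-\xi^{c/\mu_u}$ (with $\xi=\tilde x_u^{\mathrm{in}}/x_u^{\mathrm{in}}$) directly into $\lesssim 1-\xi=|x_u^{\mathrm{in}}-\tilde x_u^{\mathrm{in}}|/x_u^{\mathrm{in}}$.

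\emph{Center component.} You cite Lemma \ref{thMonotoneDistance} but only extract monotonicity, $\tilde\chi(s)\le\tilde\chi(-\tilde t^0)\le 2\tilde x_u^{\mathrm{in}}$. That gives $\int_{-\tilde t^0}^0\tilde\chi\le 2\tilde x_u^{\mathrm{in}}\tilde t^0$, and $\tilde x_u^{\mathrm{in}}\tilde t^0=x_u^{\mathrm{in}}\xi\ln(1/\xi)/\mu_u$ is \emph{not} bounded by $x_u^{\mathrm{in}}\xi(1-\xi)$ as $\xi\to 0$. Your second bound, $\tilde\chi(s)\le 2\exp(\mu_u s)x_u^{\mathrm{in}}$, is weaker than the first everywhere on $[-\tilde t^0,0]$ (the two agree only at $s=-\tilde t^0$), so there is no meaningful interpolation or splitting point, and the resulting estimate $\frac{2}{\mu_u}(x_u^{\mathrm{in}}-\tilde x_u^{\mathrm{in}})$ is off by the factor $\xi$ required in the claim. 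What is actually needed is the full content of Lemma \ref{thMonotoneDistance}: the \emph{exponential} decay $\tilde\chi(s)\le 2\tilde x_u^{\mathrm{in}}\exp(-\alpha(\mu_s-\mu_u)(s+\tilde t^0))$. Integrating that gives $\int_{-\tilde t^0}^0\tilde\chi\le\frac{2\tilde x_u^{\mathrm{in}}}{\alpha(\mu_s-\mu_u)}\bigl(1-\xi^{\alpha(\mu_s-\mu_u)/\mu_u}\bigr)\lesssim\tilde x_u^{\mathrm{in}}(1-\xi)$, and multiplying by $\varepsilon C$ gives the stated bound.

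\emph{Stable component.} Your discussion of competing rates $\mu_u<\mu_s<\mu_{ss}$ and a net factor $\exp(-(\mu_u-?)\tilde t^0)$, culminating in a restriction to ``trajectories genuinely contributing to the Lipschitz bound,'' is not a viable route: the estimate must hold for all admissible pairs, and the exponential ordering plays no role here. The paper's argument is much more elementary. Using variation of constants with the diagonal linear part, the homogeneous contribution is a diagonal transition matrix with entries in $[\exp(-\tfrac1\alpha\mu_{ss}\tilde t^0),1]$, so its deviation from the identity is at most $1-\exp(-\tfrac1\alpha\mu_{ss}\tilde t^0)$, which by the Lipschitz inequality above is $\lesssim 1-\xi$; multiplied by $\|\tilde x_{s,ss}^{\mathrm{in}}\|\le 2$ this is already of the right form. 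The inhomogeneous contribution is $\varepsilon\int_0^{\tilde t^0}\exp(-\alpha\mu_s s)\cdot O(1)\,ds\le\frac{\varepsilon C}{\alpha\mu_s}\bigl(1-\xi^{\alpha\mu_s/\mu_u}\bigr)\lesssim\varepsilon(1-\xi)$, again by the same inequality. Nothing unbounded ever has to be compensated; the apparent divergence of $\tilde t^0$ never appears because $1-\exp(-c\,\tilde t^0)\le 1$ is automatically bounded and becomes $\lesssim 1-\xi$ via the power substitution. This also explains why no geometric-mean interpolation or rate-competition bookkeeping is needed anywhere in the lemma.
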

\begin{proof}
%
The $x_c$-component is estimated similar to lemma \ref{thLocalBounds}:
\[
|\tilde{x}_c(t) - \tilde{x}_c^\mathrm{in}| 
   \;\le\; \varepsilon C \int_{-\tilde{t}^0}^0 |\tilde{\chi}(s)| \diff s
   \;\le\; \varepsilon C \int_0^{\tilde{t}^0} \exp(-\alpha(\mu_s-\mu_u) s) |\tilde{x}_u^\mathrm{in}| \,\|\tilde{x}_{s,ss}^\mathrm{in}\|.
\]
Again $C$ is the uniform bound on $\tilde{g}|_\mathcal{U}$. The last inequality uses the estimate of lemma \ref{thMonotoneDistance}.
Plugging in (\ref{eqTimeInitialSegment}) and $\|\tilde{x}_{s,ss}^\mathrm{in}\|\le2$ we obtain
\[
\begin{array}{rcl} 
\displaystyle |x_c(0) - x_c^\mathrm{in}| 
   &\le& \displaystyle \frac{2 \varepsilon C}{\alpha(\mu_s-\mu_u)}|\tilde{x}_u^\mathrm{in}|\left( 1 - \exp(-\alpha(\mu_s-\mu_u)\tilde{t}^0) \right)
\\ &\le& \displaystyle \frac{2 \varepsilon C}{\alpha(\mu_s-\mu_u)}|\tilde{x}_u^\mathrm{in}|\left( 
                1 - \left( \frac{x_u^\mathrm{in}}{\tilde{x}_u^\mathrm{in}} \right)^{-(\mu_s-\mu_u)/\mu_u} \right)
\\ &\le& \displaystyle \frac{2 \varepsilon C L}{\alpha(\mu_s-\mu_u)}|\tilde{x}_u^\mathrm{in}|\left( 
                1 - \frac{\tilde{x}_u^\mathrm{in}}{x_u^\mathrm{in}} \right).
\end{array}
\]
The last inequality with $L := \max\{ (\mu_s-\mu_u)/\mu_u,\, 1 \}$ uses 
the general Lipschitz estimate $1-\xi^\vartheta \le \max\{\vartheta,1\}(1-\xi)$ for arbitrary $0<\xi<1$ and $\vartheta > 0$.
Finally, the choice $L_c:= 2 C L \alpha^{-1} (\mu_s-\mu_u)^{-1}$  yields the claim on the $x_c$-component.

The $x_{s,ss}$-component is contracted at most with the rate of the strong stable eigenvalue $\mu_{ss}$, perturbed by small higher order terms. 
Indeed, due to the diagonal form of the linear part, the variation-of-constant formula reads
\[
\begin{array}{rcl} 
\tilde{x}_{s,ss}(0)
&=& \displaystyle
  \left( \begin{array}{cc} \exp \int_{-\tilde{t}^0}^0 -\mu_s(\tilde{x}_c(s)) \diff s \hspace{-1em} & 0 \\ 
         0 & \hspace{-1em} \exp \int_{-\tilde{t}^0}^0 -\mu_{ss}(\tilde{x}_c(s)) \diff s \end{array} \right)
  \tilde{x}_{s,ss}^\mathrm{in}
\\[3ex] &+& \displaystyle
  \varepsilon \int_{-\tilde{t}^0}^0 
  \left( \begin{array}{cc} \exp \int_s^0 -\mu_s(\tilde{x}_c(\sigma)) \diff \sigma \hspace{-1em} & 0 \\ 
         0 & \hspace{-1em} \exp \int_s^0 -\mu_{ss}(\tilde{x}_c(\sigma)) \diff \sigma \end{array} \right)
  \tilde{g}_{s,ss}(\tilde{x}(s))\tilde{x}_{s,ss} \diff s.
\end{array}
\]
We have already established uniform bounds on the nonlinearity, $|g| < C$, and the contraction rates, 
$\alpha\mu_s < \mu_s(\tilde{x}_c) < \mu_{ss}(\tilde{x}_c) < \frac{1}{\alpha}\mu_{ss}$, 
with $0 < \alpha < 1$ arbitrarily close to $1$. Thus
\[
\begin{array}{rcl}
\| \tilde{x}_{s,ss}(0) - \tilde{x}_{s,ss}^\mathrm{in} \|
  &<& \displaystyle
        \left( 1 - \exp( -{\textstyle\frac{1}{\alpha}}\mu_{ss}\tilde{t}^0 ) \right)
        \| \tilde{x}_{s,ss}^\mathrm{in} \|
  \;+\; \varepsilon C \int_0^{\tilde{t}^0} \exp( -\alpha\mu_s s ) \diff s
\\[1ex] &\le& \displaystyle
        2 \left( 1 - \exp( -{\textstyle\frac{1}{\alpha}}\mu_{ss}\tilde{t}^0 ) \right)
  \;+\; \frac{\varepsilon C}{\alpha\mu_s} \left( 1 - \exp( -\alpha\mu_s \tilde{t}^0 ) \right)
\\[1ex] &=& \displaystyle
        2 \left( 1 - \left( \frac{\tilde{x}_u^\mathrm{in}}{x_u^\mathrm{in}} \right)^{\mu_{ss}/\alpha\mu_u} \right)
  \;+\; \frac{\varepsilon C}{\alpha\mu_s} 
        \left( 1 - \left( \frac{\tilde{x}_u^\mathrm{in}}{x_u^\mathrm{in}} \right)^{\alpha\mu_{s}/\mu_u} \right)
\\[2ex] &\le& \displaystyle
        2 L_1 \left( 1 - \frac{\tilde{x}_u^\mathrm{in}}{x_u^\mathrm{in}} \right)
  \;+\; \frac{2 \varepsilon C L_2}{\alpha\mu_s} 
        \left( 1 - \frac{\tilde{x}_u^\mathrm{in}}{x_u^\mathrm{in}} \right).
\end{array}
\]
The last inequality again uses
the general estimate $1-\xi^\vartheta \le \max\{\vartheta,1\}(1-\xi)$ for arbitrary $0<\xi<1$ and $\vartheta > 0$.
The constants $L_1$ and $L_2$ are independent of $\varepsilon$ and therefore yield the claim.
\end{proof}

\begin{thm}\label{thLipschitzLocalMap}
The local map $\Psi^\mathrm{loc}$ (\ref{eqLocalMap}, \ref{eqLocalMapExtension}), 
i.e.~the local passage on $\overline\Sigma^\mathrm{in}$ including the singular line $x_u^\mathrm{in} = 0$, 
is Lipschitz continuous. 
Let 
\[
\| \tilde{x}^\mathrm{in} - x^\mathrm{in} \|=|\tilde{x}_u^\mathrm{in} - x_u^\mathrm{in}|+|\tilde{x}_c^\mathrm{in} - x_c^\mathrm{in}|+|\tilde{x}_s^\mathrm{in} - x_s^\mathrm{in}|
\]
be the norm in the in-section $\Sigma^\mathrm{in}$.
Then there exists an $\varepsilon_0>0$ such that for all $\varepsilon<\varepsilon_0$ and
$0 \le \tilde{x}_u^\mathrm{in} \le x_u^\mathrm{in}$ the following estimates hold:
\[
\begin{array}{rcl}
| (\tilde{x}_c^\mathrm{out} - x_c^\mathrm{out}) - ( \tilde{x}_c^\mathrm{in} - x_c^\mathrm{in} ) |
  &\le& \varepsilon L_c \| \tilde{x}^\mathrm{in} - x^\mathrm{in} \|,
\\
\| \tilde{x}_{s,ss}^\mathrm{out} - x_{s,ss}^\mathrm{out} \|
  \;=\; | \tilde{x}_{s}^\mathrm{out} - x_{s}^\mathrm{out} | + | \tilde{x}_{ss}^\mathrm{out} - x_{ss}^\mathrm{out} |
  &\le& L_s |x_u^\mathrm{in}|^\beta \| \tilde{x}^\mathrm{in} - x^\mathrm{in} \|,
\end{array}
\]
with $L_c>0$, $L_s>0$, $\beta>0$ uniform in $x$, $\varepsilon$.
In particular, $L_s |x_u^\mathrm{in}|^\beta$ is arbitrarily small for $x_u^\mathrm{in}$ small enough.

The drift in the center direction can be made arbitrarily small by choosing a sufficiently small local neighborhood.
The contraction in the transverse directions is arbitrarily strong by restricting the in-section to the part close to the
primary object, i.e.~the stable manifold of the origin.
\end{thm}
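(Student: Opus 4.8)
The plan is to compare two solutions $x(t)$, $\tilde x(t)$ of the flattened, rescaled system (\ref{eqSplitLocalSystem}) along the whole passage and to control the two relevant output increments by a coupled pair of Gronwall inequalities for
\[
  a(t):=|\tilde x_c(t)-x_c(t)|,\qquad b(t):=\|\tilde x_{s,ss}(t)-x_{s,ss}(t)\|.
\]
If $\tilde x_u^\mathrm{in}=0$ the conclusion is already contained in corollary \ref{thContinuousLocalMap} (then $\tilde x_{s,ss}^\mathrm{out}=0$ and $\tilde x_c^\mathrm{out}=\tilde x_c^\mathrm{in}$), so assume $0<\tilde x_u^\mathrm{in}\le x_u^\mathrm{in}$. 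As in (\ref{eqTimeInitialSegment}), start $\tilde x$ at the negative time $-\tilde t^0$ so that $\tilde x_u(t)\equiv x_u(t)=\mathrm{e}^{\mu_u t}x_u^\mathrm{in}$ for all $t\ge0$; by corollary \ref{thLocalPassage} both trajectories then hit the out-section $\{x_u=1\}$ at the \emph{same} time $t^\mathrm{loc}=\mu_u^{-1}\ln(1/x_u^\mathrm{in})$, and the difference $\tilde x-x$ has identically vanishing unstable component, so $\|\tilde x-x\|=a+b$ during the whole comparison. Lemma \ref{thLipEstimateInitialSegment}, together with $x_{ss}^\mathrm{in}=\tilde x_{ss}^\mathrm{in}=1$ and $\tilde x_u^\mathrm{in}/x_u^\mathrm{in}\le1$, supplies the starting data at $t=0$,
\[
  a(0)\le(1+\varepsilon L_c)\,\|\tilde x^\mathrm{in}-x^\mathrm{in}\|,\qquad
  b(0)\le\Bigl(1+\tfrac{L_s}{x_u^\mathrm{in}}\Bigr)\|\tilde x^\mathrm{in}-x^\mathrm{in}\|.
\]
The factor $1/x_u^\mathrm{in}$ in $b(0)$ — reflecting that $\tilde t^0\to\infty$ as $\tilde x_u^\mathrm{in}\to0$, i.e.\ the singular nature of the passage — is the quantity that must be made to disappear again, and arranging this is the heart of the argument.

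Next I would write down the difference equations on $[0,t^\mathrm{loc}]$. Splitting $-\mu_{s,ss}(\tilde x_c)\tilde x_{s,ss}+\mu_{s,ss}(x_c)x_{s,ss}$ and $\varepsilon\tilde g_{s,ss}(\tilde x)\tilde x_{s,ss}-\varepsilon\tilde g_{s,ss}(x)x_{s,ss}$ into a part acting on $\tilde x_{s,ss}-x_{s,ss}$ plus remainders, bounding the remainders by the eigenvalue-gradient estimate (\ref{eqEigenvalueGradients}), the Lipschitz bounds (\ref{eqGeneralHotLipschitzBounds}), proposition \ref{thEigenvalueBounds}, and $\|\tilde x_{s,ss}(t)\|\le 2\mathrm{e}^{-\alpha\mu_s t}$ (lemma \ref{thLocalBounds} applied to $\tilde x$), and finally absorbing the small term $\varepsilon C\|\tilde x_{s,ss}\|b$ into the linear contraction, one arrives at
\[
  b'\ \le\ -\alpha'\mu_s\,b\ +\ 2\varepsilon C\,\|\tilde x_{s,ss}\|\,a ,
\]
for some $\alpha'<\alpha$ still obeying $\alpha'\mu_s>\mu_u$ (possible because $\mu_s>\mu_u$). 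For the center component, using $\tilde x_u=x_u$ and $\tilde\chi(t):=|\tilde x_u(t)|\,\|\tilde x_{s,ss}(t)\|\le 2x_u^\mathrm{in}\mathrm{e}^{-\alpha(\mu_s-\mu_u)t}$ (lemma \ref{thMonotoneDistance} with $\tilde\chi(0)\le2x_u^\mathrm{in}$),
\[
  |a'|\ \le\ \varepsilon C\,x_u\,b\ +\ \varepsilon C\,\tilde\chi\,(a+b) .
\]

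Then I would integrate. The $b$-inequality gives $b(s)\le\mathrm{e}^{-\alpha'\mu_s s}\bar B$ with $\bar B=b(0)+\BigO(\varepsilon\sup_{[0,t^\mathrm{loc}]}a)$, the convolution being bounded using $\alpha'<\alpha$. Feeding this into the $a$-inequality, the critical terms $\varepsilon C\int_0^t x_u(s)b(s)\diff s$ and $\varepsilon C\int_0^t\tilde\chi(s)b(s)\diff s$ are each $\BigO(\varepsilon\,x_u^\mathrm{in}\bar B)=\BigO(\varepsilon\,\|\tilde x^\mathrm{in}-x^\mathrm{in}\|)$ — here the genuine prefactor $x_u^\mathrm{in}$ (coming from $x_u(0)=x_u^\mathrm{in}$) cancels the $1/x_u^\mathrm{in}$ sitting inside $\bar B$, while the time integrals converge precisely because $\alpha'\mu_s>\mu_u$ resp.\ because $\tilde\chi$ decays — and $\varepsilon C\int_0^t\tilde\chi(s)a(s)\diff s$ is dominated by $\bigl(\varepsilon C\int_0^\infty\tilde\chi(s)\diff s\bigr)\sup_{[0,t^\mathrm{loc}]}a=\BigO(\varepsilon)\sup_{[0,t^\mathrm{loc}]}a$. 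Taking the supremum of the resulting inequality over $[0,t^\mathrm{loc}]$ and absorbing the $\varepsilon\sup a$ contribution for $\varepsilon$ small yields $\sup_{[0,t^\mathrm{loc}]}a\le C_1\|\tilde x^\mathrm{in}-x^\mathrm{in}\|$ and hence $\bar B\le\tfrac{L_s'}{x_u^\mathrm{in}}\|\tilde x^\mathrm{in}-x^\mathrm{in}\|$ after using $x_u^\mathrm{in}<1$ to absorb the $\BigO(1)$ pieces of $\bar B$.

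Finally, evaluating at $t=t^\mathrm{loc}$, where $\mathrm{e}^{-\alpha'\mu_s t^\mathrm{loc}}=(x_u^\mathrm{in})^{\alpha'\mu_s/\mu_u}$, gives
\[
  b(t^\mathrm{loc})\ \le\ (x_u^\mathrm{in})^{\alpha'\mu_s/\mu_u}\,\bar B\ \le\ L_s'\,(x_u^\mathrm{in})^{\beta}\,\|\tilde x^\mathrm{in}-x^\mathrm{in}\| ,\qquad \beta:=\tfrac{\alpha'\mu_s}{\mu_u}-1 ,
\]
and $\beta>0$ once $\alpha$ is chosen close enough to $1$, precisely because $\mu_s>\mu_u$; this is the asserted bound on $\|\tilde x_{s,ss}^\mathrm{out}-x_{s,ss}^\mathrm{out}\|$. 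For the center increment,
\[
  \bigl|(\tilde x_c^\mathrm{out}-x_c^\mathrm{out})-(\tilde x_c^\mathrm{in}-x_c^\mathrm{in})\bigr|\ \le\ |\tilde x_c(0)-\tilde x_c^\mathrm{in}|\ +\ \int_0^{t^\mathrm{loc}}|a'|\diff s ,
\]
whose first summand is $\le\varepsilon L_c\|\tilde x^\mathrm{in}-x^\mathrm{in}\|$ by lemma \ref{thLipEstimateInitialSegment} and whose second summand is the $\BigO(\varepsilon\,\|\tilde x^\mathrm{in}-x^\mathrm{in}\|)$ bounded above; renaming constants gives the stated estimate with a new $L_c$ still carrying a factor $\varepsilon$. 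Lipschitz continuity of $\Psi^\mathrm{loc}$ on $\overline\Sigma^\mathrm{in}$ is the combination of these two estimates with the singular case of corollary \ref{thContinuousLocalMap}, and the two closing assertions then follow at once: the center constant is proportional to $\varepsilon$ and hence tunable by shrinking the original neighborhood, while $L_s|x_u^\mathrm{in}|^\beta\to0$ as the in-section is restricted towards the stable manifold $\{x_u^\mathrm{in}=0\}$. The main obstacle is precisely the bookkeeping of the integration step above: one must ensure that the $1/x_u^\mathrm{in}$ blow-up inside $\bar B$ is matched wherever it appears, either by a genuine factor $x_u^\mathrm{in}$ (in the feedback terms of the $a$-equation, which is what forces convergence of the relevant integrals, i.e.\ $\alpha'\mu_s>\mu_u$) or by the contraction $(x_u^\mathrm{in})^{\alpha'\mu_s/\mu_u}$ accumulated over the long passage time $t^\mathrm{loc}$, the latter keeping a strictly positive spare power $\beta$ only because the weak stable rate $\mu_s$ still exceeds the unstable rate $\mu_u$.
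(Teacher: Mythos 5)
Your proof follows the paper's argument essentially step for step: the same time-shift synchronization so that $\tilde{x}_u(t)\equiv x_u(t)$, the same appeal to lemma \ref{thLipEstimateInitialSegment} for the data at $t=0$, the same coupled Gronwall inequalities from (\ref{eqSplitLocalSystem}) for the stable and center discrepancies, the same bootstrap via a supremum bound on the center distance, and the same cancellation of the $1/x_u^\mathrm{in}$ singularity in $b(0)$ by the accumulated contraction $(x_u^\mathrm{in})^{\alpha\mu_s/\mu_u}$ over $t^\mathrm{loc}$, resting on the eigenvalue gap $\mu_s>\mu_u$. The only cosmetic deviations are that the paper tracks the monotone total variation $d_c(t)=\int_0^t|(\tilde{x}_c-x_c)'|\diff s$ instead of $a(t)=|\tilde{x}_c(t)-x_c(t)|$, and absorbs a residual $\ln(1/x_u^\mathrm{in})$ into the exponent rather than introducing the second parameter $\alpha'<\alpha$ as you do.
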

\begin{proof}
If $\tilde{x}_u^\mathrm{in} = 0$ then 
$\tilde{x}_{s,ss}^\mathrm{out} = 0$, $\tilde{x}_{c}^\mathrm{out} = \tilde{x}_{c}^\mathrm{in}$,
and the Lipschitz estimates follow from corollary \ref{thContinuousLocalMap}. 
Therefore let $0 < \tilde{x}_u^\mathrm{in} \le x_u^\mathrm{in}$. 
Choose $\tilde{x}(-\tilde{t}^0) = \tilde{x}^\mathrm{in}$ according to (\ref{eqTimeInitialSegment}). 
Then $\tilde{x}(0)$ is estimated by lemma \ref{thLipEstimateInitialSegment}, and $x(0)=x^\mathrm{in}$.

We abbreviate
\begin{equation}\label{eqDcDsDefinition}
\begin{array}{rclrcl}
 d_c(t) &:=& \displaystyle \int_0^t \left| \left( \tilde{x}_c(s) - x_c(s) \right)' \right| \diff s,
 &\qquad
 d_c(0) &=& 0,
\\
 d_s(t) &:=& \| \tilde{x}_{s,ss}(t) - x_{s,ss}(t) \|,
\end{array}
\end{equation}
In particular, using lemma \ref{thLipEstimateInitialSegment}, we estimate the initial value,
\begin{equation}\label{eqDsInitial}
 d_s(0) \;\le\; L_0 \, \frac{1}{x_u^\mathrm{in}} | x_u^\mathrm{in} - \tilde{x}_u^\mathrm{in} | 
                + \| \tilde{x}_{s,ss}^\mathrm{in} - x_{s,ss}^\mathrm{in} \|
        \;\le\; L_1 \, \frac{1}{x_u^\mathrm{in}} \| \tilde{x}^\mathrm{in} - x^\mathrm{in} \|,
\end{equation}
as well as the distance in the center direction,
\begin{equation}\label{eqXcInitial}
  | \tilde{x}_c(t) - x_c(t) | 
  \;\le\; d_c(t) + | \tilde{x}_c(0) - \tilde{x}_c^\mathrm{in} | + | \tilde{x}_c^\mathrm{in} - x_c^\mathrm{in} |
  \;\le\; d_c(t) + L_2 \| \tilde{x}^\mathrm{in} - x^\mathrm{in} \|.
\end{equation}
The constants $L_1, L_2>0$ are uniform in $x$, $\tilde{x}$, $\varepsilon$.

Vector field (\ref{eqSplitLocalSystem}) yields the estimate
\[
\begin{array}{rcl}
 (\tilde{x}_{s,ss} - x_{s,ss})' \;=\; -\mu_{s,ss}(\tilde{x}_c)(\tilde{x}_{s,ss} - x_{s,ss}) + R
\end{array}
\]
with remainder term
\[
\begin{array}{rcl}
\|R\| &\le & \|\mu_{s,ss}(\tilde{x}_c) - \mu_{s,ss}(x_c)\|\,\|x_{s,ss}\| 
\\ && ~       + \varepsilon \|\tilde{g}_{s,ss}(\tilde{x})\|\,\|\tilde{x}_{s,ss}-x_{s,ss}\| 
               + \varepsilon \|\tilde{g}_{s,ss}(\tilde{x}) - \tilde{g}_{s,ss}(x)\|\,\|x_{s,ss}\|
\\ &\le& \varepsilon C | \tilde{x}_c - x_c |\,\|x_{s,ss}\| + \varepsilon C \| \tilde{x}_{s,ss} - x_{s,ss} \| 
+ \varepsilon C \| \tilde{x} - x \|\,\|x_{s,ss}\|
\\ &=& \varepsilon C ( | \tilde{x}_c - x_c |\,\|x_{s,ss}\| + \| \tilde{x}_{s,ss} - x_{s,ss} \| ). 
\end{array}
\]
Here, $C$ is the uniform bound on $\tilde{g}$ and uniform Lipschitz-bound on $\tilde{g}$ and $\mu_{s,ss}$, 
see (\ref{eqGeneralHotBounds}), (\ref{eqGeneralHotLipschitzBounds}) and (\ref{eqEigenvalueGradients}).

In the last step we have used 
\[
\| \tilde{x} - x \|= | \tilde{x}_c - x_c |+ \| \tilde{x}_{s,ss} - x_{s,ss} \|
\]
since $| \tilde{x}_u - x_u |=0$ due to our choice (\ref{eqTimeInitialSegment}).

For arbitrary $0 < \alpha < 1$ we have $\alpha\mu_s < \mu_s(\tilde{x}_c) < \mu_{ss}(\tilde{x}_c)$ provided $\varepsilon_0$ is chosen accordingly. Hence
\[
\begin{array}{rcl}
d_s(t)' &\le& \| \tilde{x}_{s,ss}(t) - x_{s,ss}(t) \|'
\\ &\le&
   - \alpha\mu_s d_s(t) + \varepsilon C | \tilde{x}_c(t) - x_c(t) |\,\|x_{s,ss}(t)\|.
\end{array}
\]
Analogous Lipschitz estimates on $\tilde{g}_c$ yield
\[
\begin{array}{rcl}
d_c(t)' &=& | \tilde{x}_c(t)' - x_c(t)'|
\\ &\le& \varepsilon | x_u | \, \|\tilde{g}_c(\tilde{x})\|\,\|\tilde{x}_{s,ss}-x_{s,ss}\| 
         + \varepsilon | x_u | \, \|\tilde{g}_c(\tilde{x}) - \tilde{g}_c(x)\|\,\|x_{s,ss}\|
\\ &\le& \varepsilon C | x_u(t) | \, \left( \| \tilde{x}_{s,ss}(t) - x_{s,ss}(t) \| + | \tilde{x}_c(t) - x_c(t) |\,\|x_{s,ss}(t)\| \right )
\\ &=& \varepsilon C \exp(\mu_u t) | x_u^\mathrm{in} | \, \left( d_s(t) + | \tilde{x}_c(t) - x_c(t) |\,\|x_{s,ss}(t)\| \right ).
\end{array}
\]
Using estimate (\ref{eqXcInitial}), the exponential bounds on $x_{s,ss}$ of lemma \ref{thLocalBounds}, 
and $\|x_{s,ss}^\mathrm{in}\| < 2$, we obtain 
\begin{equation}\label{eqDsDcEvolution}
\begin{array}{rcl}
d_s(t)' &\le& - \alpha\mu_s d_s(t) 
   + 2 \varepsilon C \exp(-\alpha \mu_s t)
     \left( d_c(t) + L_2 \| \tilde{x}^\mathrm{in} - x^\mathrm{in} \| \right),
\\
d_c(t)' &\le& \varepsilon C \exp(\mu_u t) | x_u^\mathrm{in} | \, d_s(t) 
\\ && ~
   + 2 \varepsilon C \exp(-\alpha(\mu_s-\mu_u) t) | x_u^\mathrm{in} |
     \left( d_c(t) + L_2 \| \tilde{x}^\mathrm{in} - x^\mathrm{in} \| \right).
\end{array}
\end{equation}
The Gronwall estimate of $\exp(\alpha\mu_s t)d_s(t)$,
\begin{equation}\label{eqDsGronwall}
 d_s(t) \;\le\; \exp(-\alpha\mu_s t) 
   \left( d_s(0) + 2 \varepsilon C t L_2 \| \tilde{x}^\mathrm{in} - x^\mathrm{in} \| + 2 \varepsilon C \int_0^t d_c(s) \diff s \right),
\end{equation}
can be plugged into the estimate of $d_c$:
\[
\begin{array}{rcll}
d_c(t)' &\le& 
  2 \varepsilon C \exp(-\alpha(\mu_s-\mu_u) t) | x_u^\mathrm{in} |
  & \!\!\!\!\displaystyle \Big( d_c(t) + \varepsilon C \int_0^t d_c(s) \diff s 
\\ &&& ~~ + d_s(0) + (\varepsilon C t + 1) L_2 \| \tilde{x}^\mathrm{in} - x^\mathrm{in} \| \Big)
\end{array}
\]
Let $\| \tilde{x}^\mathrm{in} - x^\mathrm{in} \|$ be an upper bound on $d_c$, i.e.~assume
\begin{equation}\label{eqDcUpperBound}
 \| \tilde{x}^\mathrm{in} - x^\mathrm{in} \| \;>\; \sup_{0 \le s \le t} d_c(s).
\end{equation}
As $d_c(0)=0$, this assumption is valid on an initial segment.
Then 
\[
 d_c(t)' \;\le\; 
  2 \varepsilon C \exp(-\alpha(\mu_s-\mu_u) t) | x_u^\mathrm{in} |
  \left( d_c(t) + d_s(0) + (\varepsilon C t + 1) L_3 \| \tilde{x}^\mathrm{in} - x^\mathrm{in} \| \right),
\]
with $L_3 = L_2+1$. Gronwall inequality yields
\[
 d_c(t) \;\le\; \xi(t) + \int_0^t \xi(s)\eta(s) \exp\left(\int_s^t \eta(\sigma) \diff\sigma\right) \diff s
\]
with
\[
\begin{array}{rcl}
\eta(t) &=& 2 \varepsilon C \exp(-\alpha(\mu_s-\mu_u) t) | x_u^\mathrm{in} |
\\ &\le&
  2 \varepsilon C \exp(-\alpha(\mu_s-\mu_u) t),
\\[1ex]
\displaystyle \int_s^t \eta(\sigma) \diff\sigma &\le& \displaystyle \frac{2 \varepsilon C}{\alpha(\mu_s-\mu_u)}, \qquad 0 \le s \le t,
\\[2ex]
\xi(t) &=& \displaystyle 
  2 \varepsilon C | x_u^\mathrm{in} | \int_0^t \exp(-\alpha(\mu_s-\mu_u) s)
  \left( d_s(0) + (\varepsilon C s + 1) L_3 \| \tilde{x}^\mathrm{in} - x^\mathrm{in} \| \right) \diff s,
\\ &\le&
  \varepsilon \frac{2 C}{\alpha(\mu_s-\mu_u)} | x_u^\mathrm{in} | \left( d_s(0) + L_3 \| \tilde{x}^\mathrm{in} - x^\mathrm{in} \| \right)
  + \varepsilon^2 \frac{2 C^2}{\alpha^2(\mu_s-\mu_u)^2} | x_u^\mathrm{in} | L_3 \| \tilde{x}^\mathrm{in} - x^\mathrm{in} \|
\\ &\le&
  \varepsilon L_4 | x_u^\mathrm{in} | \left( d_s(0) + \| \tilde{x}^\mathrm{in} - x^\mathrm{in} \| \right)
\\ &\le&
  \varepsilon L_4 ( L_1+1 ) \| \tilde{x}^\mathrm{in} - x^\mathrm{in} \|.
\end{array}
\]
The second last inequality needs $\varepsilon_0$ chosen small enough, $L_4$ is a uniform constant. 
The last inequality uses the initial estimate (\ref{eqDsInitial}) and $|x_u^\mathrm{in}| \le 1$. Finally
\begin{equation}\label{eqDcBound}
\begin{array}{rcl}
  d_c(t) &\le& \displaystyle
  \xi(t) + \int_0^t \xi(s)\eta(s) \exp\left(\int_s^t \eta(\sigma) \diff\sigma\right) \diff s
\\ &\le&  \displaystyle
  \varepsilon L_4 ( L_1+1 ) \| \tilde{x}^\mathrm{in} - x^\mathrm{in} \|   
  \left( 1 + 2\varepsilon C \exp({\textstyle\frac{2 \varepsilon C}{\alpha(\mu_s-\mu_u)}}) \int_0^t \exp(-\alpha(\mu_s-\mu_u) s)   \diff s \right)
\\
  &\le& \varepsilon L_5 \| \tilde{x}^\mathrm{in} - x^\mathrm{in} \|,
\end{array}
\end{equation}
with $L_5$ independent of $\varepsilon$. Choose $\varepsilon_0$ small enough. 
Then $\varepsilon L_5 \ll 1$ and assumption (\ref{eqDcUpperBound}) remains valid up to the out-section.
This yields the first claim of the theorem.

Estimates (\ref{eqDcBound}) and (\ref{eqDsGronwall}) combine to
\[
 d_s(t) \;\le\; \exp(-\alpha\mu_s t) \left( d_s(0) + \varepsilon t L_6 \| \tilde{x}^\mathrm{in} - x^\mathrm{in} \| \right).
\]
Thus
\[
\begin{array}{rcl}
 d_s^\mathrm{out} &=& d_s(\textstyle\frac{1}{\mu_u} \ln \frac{1}{|x_u^\mathrm{in}|}) 
\\[1ex] &\le& \displaystyle
    |x_u^\mathrm{in}|^{\alpha\mu_s/\mu_u} 
    \left( L_1 \frac{\|\tilde{x}^\mathrm{in}-x^\mathrm{in}\|}{|x_u^\mathrm{in}|} 
           + \frac{\varepsilon L_6}{\mu_u} \| \tilde{x}^\mathrm{in} - x^\mathrm{in} \| \ln \frac{1}{|x_u^\mathrm{in}|} \right)
\\ &\le& \displaystyle
   |x_u^\mathrm{in}|^{\alpha\mu_s/\mu_u-1} L_s \| \tilde{x}^\mathrm{in} - x^\mathrm{in} \|.
\end{array}
\]
Note that $\beta = \alpha\mu_s/\mu_u - 1 > 0$ for $\alpha$ chosen close enough to $1$. This finishes the proof.
\end{proof}


\section{The return map}
\label{secReturnMap}

In this section we define a return map for trajectories near the primary heteroclinic 
orbit $q(t)$ considered as a homoclinic orbit relative to the equivariance.
In this way a map on the in-section is defined which induces a transformation of 
Lipschitz-curves on this section.
We first prove Lipschitz- and cone properties of the return map.
The homoclinic orbit $q(t)$ corresponds to a fixed point of this map.
The stable set of the equilibrium under the return map is then constructed
as the fixed point of the associated graph transformation.
This yields the set of trajectories with $\alpha$-limit dynamics following the 
period-3 heteroclinic cycle resp. the relative homoclinic orbit $q(t)$.

Locally near $q_-$ we can use the general result on the local passage of the previous section and apply it to the Bianchi system.
The time-reversed Bianchi flow satisfies assumptions (\ref{eqGeneralEqLine}, \ref{eqGeneralCrudeLinearization}).
The out-section is then diffeomorphically mapped back to the in-section by the time-reversed (global) Bianchi flow
after reduction to the orbit space of the equivariance group,
\begin{equation}\label{eqGlobalExcursionMap}
  (x_s^\mathrm{out}, x_{ss}^\mathrm{out}, x_c^\mathrm{out}) \longmapsto 
  (x_u^\mathrm{in}, x_s^\mathrm{in}, x_c^\mathrm{in}) = 
    \Psi^\mathrm{glob} (x_s^\mathrm{out}, x_{ss}^\mathrm{out}, x_c^\mathrm{out})
\end{equation}

In particular the line $\{x_{ss}^\mathrm{out}=x_{s}^\mathrm{out}=0\}$ is mapped to the line $\{x_{u}^\mathrm{in}=x_{s}^\mathrm{in}=0\}$ by the Kasner map. 
Thus $\Psi^\mathrm{glob}$ expands the $x_c$-coordinate uniformly in a neighborhood of the 3-periodic cycle $q(t)$.

Combining both maps (the local passage and the global excursion) we obtain a return map 
\begin{equation}\label{eqReturnMap}
\Psi \,:=\; \Psi^\mathrm{glob} \circ \Psi^\mathrm{loc} \,:\; \overline\Sigma^\mathrm{in} \longrightarrow \overline\Sigma^\mathrm{in},
\end{equation}
%
%
see also figure \ref{figGlobalReturn}.
Note, that due to expansion of the center component $x_c$ under the return map, 
the domain of definition of this map is in fact a subset of $\overline\Sigma^\mathrm{in}$.

\begin{figure}
\setlength{\unitlength}{0.01\textwidth}
\centering
\begin{picture}(85,45)(-5,-5)
\thicklines
\put(  0,  0){\line(1,0){12}}\put( 20,  0){\line(1,0){15}}
\put(  0,  0){\line(0,1){16}}\put(  0, 20){\line(0,1){5}}
\put(-5,-2.5){\line(2,1){10}}
\put( -5, -5){\thinlines\line(1,1){10}}
\put(  5,  0){\vector(-1,0){1}}
\put(  6,  0){\vector(-1,0){1}}
\put(  7,  0){\vector(-1,0){1}}
\put(  5,  2.5){\vector(-2,-1){1}}
\put(  6,  3){\vector(-2,-1){1}}
\put( -5, -2.5){\vector(2,1){1}}
\put( -6, -3){\vector(2,1){1}}
\put(  0,  6){\vector(0,1){1}}
\thinlines
\put(20,0){%
\put( -8, -4){\line(2,1){16}}
\put(  8,  4){\line(0,1){8}}
\put( -8, -4){\line(0,1){8}}
\put( -8,  4){\line(2,1){16}}
\put(  8,  8){\makebox(0,0)[l]{$\,\Sigma^\mathrm{in}$}}
}
\thinlines
\put(0,20){%
\put( -8, -4){\line(2,1){16}}
\put(  8,  4){\line(1,0){10}}
\put( -8, -4){\line(1,0){10}}
\put(  2, -4){\line(2,1){16}}
\put( 13,  4.5){\makebox(0,0)[b]{$\Sigma^\mathrm{out}$}}
}
\thinlines
\qbezier(0,25)(0,40)(35,40)
\qbezier(35,40)(80,40)(80,25)
\qbezier(80,25)(80,0)(35,0)
\thicklines
\put(25,5){\line(1,0){10}}
\put(4,21){\line(0,1){4}}
\qbezier(14,5)(4,5)(4,17)
\put(5,11){\vector(-1,6){0.1}}
\put(25,5){\circle*{0.5}}
\put(4,21){\circle*{0.5}}
\put(7,9){\makebox(0,0)[bl]{$\Psi^\mathrm{loc}$}}
\put(25,5){\makebox(0,0)[br]{$x^\mathrm{in}$}}
\put(4,21){\makebox(0,0)[tl]{$x^\mathrm{out}$}}
\qbezier(4,25)(4,35)(35,35)
\qbezier(35,35)(75,35)(75,25)
\qbezier(75,25)(75,2)(32,2)
\put(22,2){\line(1,0){10}}
\put(22,2){\circle*{0.5}}
\put(70,30){\makebox(0,0)[tr]{$\Psi^\mathrm{glob}$}}
\put(73,35){\makebox(0,0)[bl]{$q(t)$}}
\put(22,2){\makebox(0,0)[br]{$\Psi(x^\mathrm{in})$}}
\end{picture}
\caption{\label{figGlobalReturn}The return map. $\Psi = \Psi^\mathrm{glob} \circ \Psi^\mathrm{loc}$.}
\end{figure}

The heteroclinic cycle $q(t)$ yields a homoclinic orbit 
after reduction to the orbit space of the equivariance group. 
It constitutes the unique fixed point $(0,0,0)$ of the return map $\Psi$.
The Kasner circle and the attached heteroclinic cap is represented by the set 
$\{x_u^\mathrm{in} = x_s^\mathrm{in} = 0\}$. The return map becomes singular on the stable manifold of the Kasner circle. 
In particular, we define on this singular set
\begin{equation}\label{eqSingularLineReturn}
\Psi (0, x_s^\mathrm{in}, x_c^\mathrm{in}) = (0,0,\Phi(x_c^\mathrm{in})),
\end{equation}
where $\Phi$ is the Kasner map.

Note that the Bianchi system (\ref{eq5dimBianchi}) preserves the signs of $N_1, N_2, N_3, \Omega$.
Although the local analysis of the previous section did not use this special structure, it is compatible with it.
To study Bianchi-IX solutions close to the 3-cycle of heteroclinics we therefore consider only the part 
of the in-section with $x_u\ge 0$, $x_s\ge 0$. This part is mapped to the in-section, again with $x_u\ge 0$, $x_s\ge 0$.

In complete analogy one can treat Bianchi-VIII solutions close to a 3-cycle, 
although this 3-cycle is then composed of orbits inside heteroclinic caps of different sign. 
Indeed, each heteroclinic orbit to the Kasner circle has a mirror image under sign reversal of $N_k$.
Each of the 3 heteroclinic orbits of the cycle can be given a sign, the choice of signs determines whether the cycle 
lies in the boundary of Bianchi-VIII or Bianchi-IX domains.

In fact, looking at a section to a given heteroclinic orbit of the cycle, thus fixing the sign of the first orbit,
we have 4 quadrants in this section corresponding to the 4 remaining choices of signs of the 2 remaining orbits.
In each of the quadrants, theorem \ref{thStableSet} will yield a local Lipschitz manifold.

Although the analysis of this section also holds for a 3-cycle in the boundary of Bianchi-VIII domains, 
we keep the notation corresponding to Bianchi-IX, i.e.~all components $x_s, x_{ss}, x_u$ are and remain non-negative.
These components correspond to the variables $N_1,N_2,N_3$ in the Bianchi system, the identification permutes along the cycle.
The component $x_c$ correspond to the variables $\Sigma_\pm$.

In the following, we first prove suitable cone conditions for the return map $\Psi$ and 
then apply almost standard graph transformation to obtain the claimed Lipschitz manifold. 
The nonstandard part of that graph transformation includes the singular line $x_u^\mathrm{in}=0$ of infinite contraction.

Note the order of fixing the rescaling parameters:
First $\varepsilon_0$ resp.~$\varepsilon$ is fixed small enough to yield our estimates of the 
local passage $\Psi^\mathrm{loc}$ with small Lipschitz constants, in particular $\varepsilon L_c \ll 1$. 
This amounts to a choice of the sections $\Sigma^\mathrm{in}$ and $\Sigma^\mathrm{out}$ in the original (unscaled) coordinates
and also fixes the global excursion map $\Psi^\mathrm{glob}$.
Then a sufficiently small upper bound for $x_u^\mathrm{in}$ is chosen, i.e.~$\Psi$ is restricted to a smaller section
\begin{equation}\label{eqReducedInSection}
\tilde\Sigma^\mathrm{in} = \{x\in\overline\Sigma^\mathrm{in} \;|\; 0 \le x_u \le \delta,\; 0 \le x_s \le \delta,\; |x_c| < \delta \}
\end{equation}
This makes the contraction of the local passage as strong as we like without changing $\Psi^\mathrm{loc}$, $\Psi^\mathrm{glob}$.
It also ensures that trajectories of interest stay close to the Kasner caps of heteroclinic orbits, 
thus we ensure that the global excursion $\Psi^\mathrm{glob}$ on the domain of interest is as close to the Kasner map as we like.
It also ensures that all non-singular trajectories in this domain indeed return to the in-section $\overline\Sigma^\mathrm{in}$.

\begin{lem}\label{thLipschitzReturnMap}
The return map (\ref{eqReturnMap}) is Lipschitz continuous. 
Furthermore, there exist $\varepsilon > 0$, $\delta>0$, $0<\sigma<1$, $K_{u,s} > 1$, and $K_c > (1-\sigma^2)^{-1} > 1$,
such that the following cone conditions hold for
$\Psi=\Psi^\mathrm{glob} \circ \Psi^\mathrm{loc} : \tilde\Sigma^\mathrm{in} \to \overline\Sigma^\mathrm{in}$.

Here $\overline\Sigma^\mathrm{in}$ is the closed in-section (\ref{eqInSection}) corresponding to the choice of $\varepsilon$, 
and $\tilde\Sigma^\mathrm{in}$ a suitable subset (\ref{eqReducedInSection}).

The cones are defined for $x\in\tilde\Sigma^\mathrm{in}$ as
\begin{equation}\label{eqCones}
\begin{array}{rcl}
C_x^\mathrm{c} &=& \{\tilde{x}\in\tilde\Sigma^\mathrm{in}\;|\; \|\tilde{x}_{u,s}-x_{u,s}\| \le \sigma |\tilde{x}_c-x_c| \},
\\
C_x^\mathrm{u,s} &=& \{\tilde{x}\in\tilde\Sigma^\mathrm{in}\;|\; |\tilde{x}_c-x_c| \le \sigma \|\tilde{x}_{u,s}-x_{u,s}\| \}.
\end{array}
\end{equation}
The cone conditions are
\begin{enumerate}
 \item[(i)] Invariance: $\Psi(C_x^\mathrm{c}) \cap \tilde\Sigma^\mathrm{in} \subset (\mathrm{int\,} C_{\Psi x}^\mathrm{c}) \cup \{\Psi x\}$ and 
   $\Psi^{-1}(C_{\Psi x}^\mathrm{u,s}) \cap \tilde\Sigma^\mathrm{in} \subset (\mathrm{int\,} C_x^\mathrm{u,s}) \cup \{x\}$;
 \item[(ii)] Contraction \& Expansion: For all $\tilde{x}\in C_x^\mathrm{c}$ we have expansion in the center direction: 
   $|(\Psi\tilde{x})_c-(\Psi x)_c| \ge K_c |\tilde{x}_c-x_c|$
   and for all $\Psi\tilde{x}\in C_{\Psi x}^\mathrm{u,s}$ we have contraction in the transverse directions:
   $\|\tilde{x}_{u,s}-x_{u,s}\| \ge K_{u,s} \|(\Psi\tilde{x})_{u,s}-(\Psi x)_{u,s}\|$.
\end{enumerate}
They hold for all, $x,\tilde{x},\Psi x,\Psi\tilde{x} \in \tilde\Sigma^\mathrm{in}$. See also figure \ref{figConeConditions}.
\end{lem}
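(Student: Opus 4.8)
The plan is to combine the Lipschitz and expansion/contraction estimates for the two constituent maps, $\Psi^\mathrm{loc}$ from Theorem~\ref{thLipschitzLocalMap} and $\Psi^\mathrm{glob}$ from the remarks following \eqref{eqGlobalExcursionMap}. Recall that Theorem~\ref{thLipschitzLocalMap} gives, for the local passage, a center drift bounded by $\varepsilon L_c\|\tilde x^\mathrm{in}-x^\mathrm{in}\|$ and a transverse contraction bounded by $L_s|x_u^\mathrm{in}|^\beta\|\tilde x^\mathrm{in}-x^\mathrm{in}\|$ with $\beta>0$; both prefactors can be made as small as we wish, the first by shrinking $\varepsilon$ and the second by shrinking the bound $\delta$ on $x_u^\mathrm{in}$ (and hence on the $\tilde\Sigma^\mathrm{in}$ of \eqref{eqReducedInSection}). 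The global excursion $\Psi^\mathrm{glob}$ is a diffeomorphism on the orbit space which, near $q(t)$, is $C^1$-close to the Kasner map; since $\Phi$ is (non-uniformly) expanding and, crucially, uniformly expanding away from the Taub points, $\Psi^\mathrm{glob}$ expands $x_c$ uniformly by a factor $\lambda>1$ and has bounded derivative in all directions, say by a constant $M$, on a neighborhood of the cycle. First I would fix $\varepsilon$ small enough that $\varepsilon L_c$ is tiny, which pins down $\Sigma^\mathrm{in},\Sigma^\mathrm{out}$ and hence $M$, $\lambda$; then fix $\delta$ small enough that $L_s\delta^\beta$ is tiny, following the order of choices spelled out before the lemma.

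Next I would derive the composite estimates. Writing $\Psi=\Psi^\mathrm{glob}\circ\Psi^\mathrm{loc}$ and using $\|\tilde x^\mathrm{in}-x^\mathrm{in}\|=|\tilde x^\mathrm{in}_u-x^\mathrm{in}_u|+|\tilde x^\mathrm{in}_s-x^\mathrm{in}_s|+|\tilde x^\mathrm{in}_c-x^\mathrm{in}_c|$, the center component of $\Psi^\mathrm{loc}$ satisfies $|(\tilde x^\mathrm{out}_c-x^\mathrm{out}_c)-(\tilde x^\mathrm{in}_c-x^\mathrm{in}_c)|\le\varepsilon L_c\|\tilde x^\mathrm{in}-x^\mathrm{in}\|$, so on the $x_c$-difference the local map is essentially the identity up to $O(\varepsilon)$; applying $\Psi^\mathrm{glob}$, which expands $x_c$ by at least $\lambda$, gives $|(\Psi\tilde x)_c-(\Psi x)_c|\ge(\lambda-O(\varepsilon))\|\tilde x^\mathrm{in}_c-x^\mathrm{in}_c\|-O(\varepsilon+\delta^\beta)\big(\|\tilde x^\mathrm{in}_{u,s}-x^\mathrm{in}_{u,s}\|\big)$, where the second term collects the (tiny) leakage of the transverse directions into $x_c$ through $\Psi^\mathrm{loc}$ and $\Psi^\mathrm{glob}$. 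On the cone $C^\mathrm{c}_x$, where $\|\tilde x_{u,s}-x_{u,s}\|\le\sigma|\tilde x_c-x_c|$, this yields $|(\Psi\tilde x)_c-(\Psi x)_c|\ge(\lambda-O(\varepsilon+\delta^\beta))|\tilde x^\mathrm{in}_c-x^\mathrm{in}_c|$, so setting $K_c$ slightly below $\lambda$ gives expansion (ii); simultaneously the transverse output is bounded by $(M L_s\delta^\beta+M\cdot O(\varepsilon))\|\tilde x^\mathrm{in}-x^\mathrm{in}\|$, which on $C^\mathrm{c}_x$ is $\ll\sigma\cdot K_c|\tilde x^\mathrm{in}_c-x^\mathrm{in}_c|\le\sigma|(\Psi\tilde x)_c-(\Psi x)_c|$, giving cone-invariance $\Psi(C^\mathrm{c}_x)\subset\mathrm{int}\,C^\mathrm{c}_{\Psi x}\cup\{\Psi x\}$. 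The $x_{u,s}$-contraction estimate $\|(\Psi\tilde x)_{u,s}-(\Psi x)_{u,s}\|\le ML_s\delta^\beta\|\tilde x^\mathrm{in}-x^\mathrm{in}\|$ shows $\Psi$ strictly contracts transverse distances by a factor that can be made $>K_{u,s}$; the unstable-cone half of (i), namely $\Psi^{-1}(C^\mathrm{u,s}_{\Psi x})\subset\mathrm{int}\,C^\mathrm{u,s}_x\cup\{x\}$, is then the formal contrapositive of the center-cone invariance together with the expansion/contraction, a standard cone-arithmetic deduction. The condition $K_c>(1-\sigma^2)^{-1}$ comes out by keeping the $\sigma$-dependent cross terms explicitly and choosing $\sigma$ small enough.

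The singular locus $\{x_u^\mathrm{in}=0\}$, where $\Psi$ reduces to $(0,0,\Phi(x^\mathrm{in}_c))$ by \eqref{eqSingularLineReturn}, needs separate but easy handling: there $\Psi^\mathrm{loc}$ collapses the fiber to a point with $x^\mathrm{out}_c=x^\mathrm{in}_c$ by the extension \eqref{eqLocalMapExtension} and Corollary~\ref{thContinuousLocalMap}, so the composite is literally $\Phi$ in the $x_c$-coordinate; since $\Phi$ is uniformly expanding off the Taub points and the cycle stays bounded away from them, all cone conditions hold there with room to spare, and the only subtlety is checking that nearby non-singular points do not violate invariance near this locus — but that is exactly controlled by the uniform $\beta$-contraction $L_s|x_u^\mathrm{in}|^\beta\to0$ as $x_u^\mathrm{in}\to0$, which forces the $C^\mathrm{u,s}$ image to shrink faster than linearly. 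Lipschitz continuity of $\Psi$ on $\tilde\Sigma^\mathrm{in}$ (including this locus) follows by composing the Lipschitz statement of Theorem~\ref{thLipschitzLocalMap} with the smooth diffeomorphism $\Psi^\mathrm{glob}$. \textbf{The main obstacle} I anticipate is bookkeeping the order of quantifiers correctly — $\varepsilon$ must be fixed first (it determines $\Sigma^\mathrm{in},\Sigma^\mathrm{out}$, hence $\Psi^\mathrm{glob}$, hence $M,\lambda$), and only afterwards may $\delta$ be shrunk to kill $L_s\delta^\beta$ — so that the "tiny" constants in the cone arithmetic are genuinely small relative to $\lambda-1$ and to the gap between $1$ and $(1-\sigma^2)^{-1}K_c^{-1}$; this is conceptually routine but must be stated carefully to make the cone inequalities strict.
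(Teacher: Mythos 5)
Your proposal matches the paper's approach: decompose $\Psi^\mathrm{glob}$ as the Kasner map plus a smooth remainder vanishing on the singular line $\{x_{s,ss}=0\}$, combine this with the estimates of Theorem~\ref{thLipschitzLocalMap}, and use the $|x_u^\mathrm{in}|^\beta$ factor to absorb all remainder terms by shrinking $\delta$; the order of fixing $\varepsilon$ first (to pin down $\Psi^\mathrm{glob}$ and its constants) and then $\delta$ is exactly the paper's. One spot where your argument is glossed: the backward invariance $\Psi^{-1}(C^\mathrm{u,s}_{\Psi x})\cap\tilde\Sigma^\mathrm{in}\subset(\mathrm{int}\,C^\mathrm{u,s}_x)\cup\{x\}$ is not literally the contrapositive of forward invariance of $C^\mathrm{c}_x$, since the cones $C^\mathrm{c}_x$ and $C^\mathrm{u,s}_x$ do not cover $\tilde\Sigma^\mathrm{in}$ when $\sigma<1$; the paper resolves this by establishing the forward cone estimate (\ref{eqEstimateCone}) uniformly for all openings $\vartheta\in[\sigma,1/\sigma]$ and taking $\vartheta=1/\sigma$, i.e., by proving forward invariance of the \emph{complement} of $C^\mathrm{u,s}_x$, which is what the contrapositive actually requires.
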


\begin{figure}
\setlength{\unitlength}{0.009\textwidth}
\centering
\begin{picture}(100,40)(-50,-5)
\put(-45,  0){%
\put(  0, -5){\vector(0,1){38}}\put(0,35){\makebox(0,0)[t]{$x_c$}}
\put( -5,  0){\vector(1,0){35}}\put(35,0){\makebox(0,0)[r]{$x_{u,s}$}}
\put( 22, 12){%
\put(  0,  0){\rotatebox{-6.59}{\makebox(0,0)[bl]{\color[rgb]{0.8,0.8,0.8}\rule{14\unitlength}{5\unitlength}}}}
\put(  0,  0){\rotatebox{-6.59}{\makebox(0,0)[tr]{\color[rgb]{0.8,0.8,0.8}\rule{14\unitlength}{5\unitlength}}}}
\put(  0,  0){\rotatebox{ 6.59}{\makebox(0,0)[bl]{\color[rgb]{1.0,1.0,1.0}\rule{14\unitlength}{5\unitlength}}}}
\put(  0,  0){\rotatebox{ 6.59}{\makebox(0,0)[tr]{\color[rgb]{1.0,1.0,1.0}\rule{14\unitlength}{5\unitlength}}}}
\put( 13,  0){\makebox(0,0)[l]{\color[rgb]{1.0,1.0,1.0}\rule{5\unitlength}{15\unitlength}}}
\put(-13,  0){\makebox(0,0)[r]{\color[rgb]{1.0,1.0,1.0}\rule{5\unitlength}{15\unitlength}}}
\put(  0,  0){\circle*{1}}
\put( -5,-10){\line(1,2){10}}
\put(  5,-10){\line(-1,2){10}}
\put(-13, -1.5){\line(26,3){26}}
\put(-13,  1.5){\line(26,-3){26}}
\put(-10, -5){\line(2,1){20}}
\put(-10,  5){\line(2,-1){20}}
\put(  0, -2){\makebox(0,0)[t]{$x$}}
\put( -4.2,  8){\makebox(0,0)[bl]{$C^\mathrm{c}_x$}}
\put(  8,  4.2){\makebox(0,0)[tl]{$C^\mathrm{u,s}_x$}}
\put(-13,  0){\makebox(0,0)[r]{$\Psi^{-1}(C^\mathrm{u,s}_{\Psi x})$}}
}
}
\put( 15,  0){%
\put(  0, -5){\vector(0,1){37}}\put(0,35){\makebox(0,0)[t]{$\Psi(x)_\mathrm{c}$}}
\put( -5,  0){\vector(1,0){30}}\put(35,0){\makebox(0,0)[r]{$\Psi(x)_\mathrm{u,s}$}}
\put( 12, 25){%
\put(  0,  0){\rotatebox{83.41}{\makebox(0,0)[bl]{\color[rgb]{0.8,0.8,0.8}\rule{14\unitlength}{5\unitlength}}}}
\put(  0,  0){\rotatebox{83.41}{\makebox(0,0)[tr]{\color[rgb]{0.8,0.8,0.8}\rule{14\unitlength}{5\unitlength}}}}
\put(  0,  0){\rotatebox{96.59}{\makebox(0,0)[bl]{\color[rgb]{1.0,1.0,1.0}\rule{14\unitlength}{5\unitlength}}}}
\put(  0,  0){\rotatebox{96.59}{\makebox(0,0)[tr]{\color[rgb]{1.0,1.0,1.0}\rule{14\unitlength}{5\unitlength}}}}
\put(  0, 13){\makebox(0,0)[b]{\color[rgb]{1.0,1.0,1.0}\rule{15\unitlength}{5\unitlength}}}
\put(  0,-13){\makebox(0,0)[t]{\color[rgb]{1.0,1.0,1.0}\rule{15\unitlength}{5\unitlength}}}
\put(  0,  0){\circle*{1}}
\put( -5,-10){\line(1,2){10}}
\put(  5,-10){\line(-1,2){10}}
\put(-1.5,-13){\line(3,26){3}}
\put( 1.5,-13){\line(-3,26){3}}
\put(-10, -5){\line(2,1){20}}
\put(-10,  5){\line(2,-1){20}}
\put( -2,  0){\makebox(0,0)[r]{$\Psi(x)$}}
\put( -4.2,  8){\makebox(0,0)[bl]{$C^\mathrm{c}_{\Psi x}$}}
\put(  8,  4.2){\makebox(0,0)[tl]{$C^\mathrm{u,s}_{\Psi x}$}}
\put(  0,-13){\makebox(0,0)[t]{$\Psi(C^\mathrm{c}_x)$}}
}
}
\put(0,0){\makebox(0,0)[b]{\begin{minipage}{16\unitlength}\centering
{\Huge$\stackrel{\mathbf{\Psi}}{\longrightarrow}$}\\
Expansion of $x_c$
Contraction of $x_{u,s}$
\end{minipage}}}
\end{picture}
\caption{\label{figConeConditions}Cone properties of the return map $\Psi$.}
\end{figure}

\begin{proof}
Lipschitz continuity of the return map $\Psi$ follows directly from Lipschitz continuity of the local passage $\Psi^\mathrm{loc}$,
see theorem \ref{thLipschitzLocalMap}, as the global excursion $\Psi^\mathrm{glob}$ is smooth.

The cone conditions require the expansion in $x_c$-direction along the Kasner circle induced by the Kasner map.
On the singular line, the global excursion is given by the Kasner map $\Phi$,
\[
 \Psi^\mathrm{glob} (x_s^\mathrm{out} = 0, x_{ss}^\mathrm{out} = 0, x_c^\mathrm{out}) 
  \;=\; (x_u^\mathrm{in}=0, x_s^\mathrm{in}=0, \Phi(x_c^\mathrm{out})).
\]
Indeed, the local passage does not change $x_c$ on the singular line, so this notation 
is consistent with (\ref{eqSingularLineReturn}).
Therefore we can write
\[
 \Psi^\mathrm{glob} (x_s, x_{ss}, x_c) \;=\; (0,0,\Phi(x_c)) + \tilde{\Psi}^\mathrm{glob} (x_s, x_{ss}, x_c) x_{s,ss},
\]
with a smooth $(3\times 2)$-matrix $\tilde{\Psi}^\mathrm{glob} (x_s, x_{ss}, x_c)$ and vector $x_{s,ss}$.
We obtain the following Lipschitz estimates
\[\begin{array}{rcl}
&&\hspace{-4em}
\| \Psi^\mathrm{glob} (\tilde{x}) - \Psi^\mathrm{glob} (x) - (0,0,\Phi(\tilde{x}_c) - \Phi(x_c)) \|
\\ &=& \left\| \tilde{\Psi}^\mathrm{glob}(\tilde{x}) \left(\tilde{x}_{s,ss} - x_{s,ss}\right) 
          + \left(\tilde{\Psi}^\mathrm{glob}(\tilde{x}) - \tilde{\Psi}^\mathrm{glob}(x)\right)x_{s,ss} \right\|
\\ &\le& C^\mathrm{glob} \left( \|\tilde{x}_{s,ss} - x_{s,ss}\| + \|\tilde{x}-x\| \|x_{s,ss}\| \right),
\end{array}
\]
with uniform bound $C^\mathrm{glob} > \|\tilde{\Psi}^\mathrm{glob}\|, \|D\tilde{\Psi}^\mathrm{glob}\|$ on 
the sup-norm of the higher order terms of the global excursion map and their derivatives. 
Note that $\tilde{x}$, $x$ lie on the out-section and $\Psi^\mathrm{glob}$ is a diffeomorphism 
from the out-section to its image on the in-section.

Using $\Psi^\mathrm{loc}(\tilde{x})$ and $\Psi^\mathrm{loc}(x)$ instead of $\tilde{x}$ and $x$ we get a similar estimate 
for the return map $\Psi(x)=\Psi^\mathrm{glob}(\Psi^\mathrm{loc}(x))$:
\begin{equation}\label{eqEstimateReturnMap}
\begin{array}{rcl}
&&\hspace{-4em}
\| \Psi(\tilde{x}) - \Psi(x) -(0,0,\Phi(\Psi^\mathrm{loc}(\tilde{x})_c) - \Phi(\Psi^\mathrm{loc}(x)_c)) \|
\\ &=& \| \Psi^\mathrm{glob}(\Psi^\mathrm{loc}(\tilde{x})) - \Psi^\mathrm{glob}(\Psi^\mathrm{loc}(x)) 
            - (0,0,\Phi(\Psi^\mathrm{loc}(\tilde{x})_c) - \Phi(\Psi^\mathrm{loc}(x)_c)) \|
\\ &\le& C^\mathrm{glob} \left( \| \Psi^\mathrm{loc}(\tilde{x})_{s,ss} - \Psi^\mathrm{loc}(x)_{s,ss}\| +
                                \|\Psi^\mathrm{loc}(\tilde{x})-\Psi^\mathrm{loc}(x)\| \|\Psi^\mathrm{loc}(x)_{s,ss}\| \right)
\\ &\le& C^\mathrm{glob} \left( L_s |x_u|^\beta \| \tilde{x} - x \| 
                                + ( \varepsilon L_c + 1+L_s |x_u|^\beta ) \| \tilde{x} - x \| |x_u|^\beta \| x \| \right)
\\ &\le& C^\mathrm{return} |x_u|^\beta \| \tilde{x} - x \|.
\end{array}
\end{equation}
The second last inequality uses the estimates of the local passage of corollary \ref{thContinuousLocalMap} 
and theorem \ref{thLipschitzLocalMap} for the choice (w.l.o.g.) $0 \le \tilde{x}_u \le x_u$. 
Note that the estimates of theorem \ref{thLipschitzLocalMap} are used in the form
\begin{eqnarray*}
| \Psi^\mathrm{loc}(\tilde{x})_c - \Psi^\mathrm{loc}(x)_c| & \leq &  
\varepsilon L_c\|\tilde{x}-x\| + |\tilde{x}_c-x_c| \leq (\varepsilon L_c+1)\|\tilde{x}-x\|, \\
\| \Psi^\mathrm{loc}(\tilde{x})_{s,ss} - \Psi^\mathrm{loc}(x)_{s,ss}\| & \leq & 
L_s |x_u|^\beta  \| \tilde{x} - x \|.
\end{eqnarray*}
The constant $C^\mathrm{return}$ is uniform in $x$, $\tilde{x}$ in the in-section. 
Because $\beta >0$, we have an arbitrarily strong contraction on $\tilde\Sigma^\mathrm{in}$, 
i.e.~for $x_u < \delta$, if we choose $\delta$ small enough.

The Kasner map $\Phi$ in the original Bianchi system is expanding everywhere except at the Taub points $T_1,T_2,T_3$. 
In particular it is expanding on our in-section:
\[
  | \Phi(a) - \Phi(b) |    \;\ge\;   \tilde{K}_c |a-b|,
\]
for some uniform constant $\tilde{K}_c > 1$.

Now choose $K_c$ with $1 < K_c < \tilde{K_c}$, and $\sigma$ with $0 < \sigma < 1$ such that $K_c(1-\sigma^2) > 1$.
(The last relation is needed to obtain a contraction in theorem \ref{thStableSet}.)

Consider the cone in center direction with opening $\vartheta>0$, 
i.e.~$\|\tilde{x}_{u,s}-x_{u,s}\| \le \vartheta |\tilde{x}_c-x_c|$.
Then (\ref{eqEstimateReturnMap}) using the local Lipschitz estimate of theorem \ref{thLipschitzLocalMap} yield
\begin{equation}\label{eqEstimateExpansion}
\begin{array}{rcl}
|(\Psi(\tilde{x}) - \Psi(x))_c| 
&\ge& | \Phi(\Psi^\mathrm{loc}(\tilde{x})_c) - \Phi(\Psi^\mathrm{loc}(x)_c) | - C^\mathrm{return} |x_u|^\beta \| \tilde{x} - x \|
\\ &\ge& \tilde{K}_c | \Psi^\mathrm{loc}(\tilde{x})_c - \Psi^\mathrm{loc}(x)_c | - C^\mathrm{return} |x_u|^\beta \| \tilde{x} - x \|
\\ &\ge& \tilde{K}_c | \tilde{x}_c - x_c | - \varepsilon L_c \tilde{K}_c \| \tilde{x} - x \| - C^\mathrm{return} |x_u|^\beta \| \tilde{x} - x \|
\\ &\ge& \left( \tilde{K}_c - \left(\varepsilon L_c \tilde{K}_c + C^\mathrm{return} |x_u|^\beta \right)(1+\vartheta) \right) 
         | \tilde{x}_c - x_c |.
\end{array}
\end{equation}
For $\varepsilon$ and $\delta$ chosen small enough, using $x_u \le \delta$, we can achieve
\[
 K_c \; < \; \tilde{K}_c - \left(\varepsilon L_c \tilde{K}_c + C^\mathrm{return} |x_u|^\beta \right)(1+1/\sigma),
\]
yielding the expansion not only in the cone $C_x^\mathrm{c}$, with $\vartheta=\sigma<1$, 
but also outside the cone $C_x^\mathrm{u,s}$, with $\vartheta=1/\sigma$.

Furthermore, using again (\ref{eqEstimateReturnMap}), we see the invariance of the cones. 
Indeed, assume again $\|\tilde{x}_{u,s}-x_{u,s}\| \le \vartheta |\tilde{x}_c-x_c|$, then we have
\[
\begin{array}{rcl}
 \|\Psi(\tilde{x})_{u,s} - \Psi(x)_{u,s}\| 
   &\le& C^\mathrm{return} |x_u|^\beta \| \tilde{x} - x \|
\\ &\le& C^\mathrm{return} |x_u|^\beta (1+\vartheta) | \tilde{x}_c - x_c |
\\ &\le& C^\mathrm{return} |x_u|^\beta (1+\vartheta) K_c^{-1} | \Psi(\tilde{x})_c - \Psi(x)_c |.
\end{array}
\]
The last inequality uses the expansion in $x_c$, thus it is valid for $\vartheta \le 1/\sigma$.
We choose $\delta$ small enough such that $C^\mathrm{return} |x_u|^\beta K_c^{-1} < \sigma/(1+\sigma)$.
Due to the monotone increase of $\vartheta/(1+\vartheta)$ we also have 
$C^\mathrm{return} |x_u|^\beta K_c^{-1} < \vartheta/(1+\vartheta)$ for all $\vartheta \ge \sigma$.
Thus we obtain the cone invariance
\begin{equation}\label{eqEstimateCone}
 \|\Psi(\tilde{x})_{u,s} - \Psi(x)_{u,s}\| \;\le\; \vartheta | \Psi(\tilde{x})_c - \Psi(x)_c |
\end{equation}
for all $\sigma \le \vartheta \le 1/\sigma$.

The choice $\vartheta=\sigma$ yields (forward) invariance of the cone $C_x^\mathrm{c}$ and the choice
$\vartheta = 1/\sigma$ yields (backward) invariance of the cone $C_{\Psi x}^\mathrm{u,s}$.
Note that the cone invariances are in fact strict as claimed in the lemma.
The above estimates are strict inequalities for $x\ne\tilde{x}$.

Now consider the cone in transverse direction, that is  $\Psi(\tilde{x}) \in C_{\Psi x}^\mathrm{u,s}$,
which amounts to $|\Psi(\tilde{x})_c-\Psi(x)_c| \le \sigma \|\Psi(\tilde{x})_{u,s}-\Psi(x)_{u,s}\|$. 
We have already established invariance. 
Thus $|\tilde{x}_c-x_c| \le \sigma \|\tilde{x}_{u,s}-x_{u,s}\|$ and estimate (\ref{eqEstimateReturnMap}) yields
\[
\begin{array}{rcl}
 \|\Psi(\tilde{x})_{u,s} - \Psi(x)_{u,s}\| 
   &\le& C^\mathrm{return} |x_u|^\beta \| \tilde{x} - x \|
\\ &\le& C^\mathrm{return} |x_u|^\beta (1+\sigma) \| \tilde{x}_{u,s} - x_{u,s} \|.
\end{array}
\] 
This is the claimed contraction, $K_{u,s}^{-1} = C^\mathrm{return} \delta^\beta (1+\sigma) < 1$, 
for $\delta$ small enough.
\end{proof}

\begin{thm}\label{thStableSet}
The (local) stable set of $(0,0,0)$ under the return map $\Psi$ is given by
\[
\{ (x_u, x_s, x_c) \,|\, x_c = x_c(x_u, x_s) \}
\]
with a Lipschitz continuous function $x_c$. In particular, $x_c(0, x_s) = 0$.
\end{thm}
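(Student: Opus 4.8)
The plan is to realize the stable set as the graph of the unique fixed point of a graph transformation on Lipschitz functions, exploiting the cone structure of Lemma~\ref{thLipschitzReturnMap}: since $\Psi$ expands the center coordinate $x_c$ by a factor $K_c>1$ and contracts the transverse coordinates $x_{u,s}=(x_u,x_s)$ by a factor $K_{u,s}>1$, the stable set must be a graph $x_c=x_c(x_u,x_s)$ over the transverse directions, and producing this slaving of $x_c$ to $(x_u,x_s)$ is exactly what the graph transform does.

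Let $\mathcal{G}$ be the complete metric space of Lipschitz functions $h:[0,\delta]^2\to[-\delta,\delta]$ with Lipschitz constant at most $\sigma$ and $h(0,0)=0$, under the sup-norm. Define the graph transform by $\mathrm{graph}(\Gamma h):=\{\,p\in\overline{\tilde\Sigma}^{\mathrm{in}}\mid\Psi(p)\in\mathrm{graph}(h)\,\}$; equivalently, for fixed $(x_u,x_s)$ one follows the ``vertical'' curve $c\mapsto\gamma(c):=\Psi(x_u,x_s,c)$, $c\in[-\delta,\delta]$ (continuous by Theorem~\ref{thLipschitzLocalMap} and smoothness of $\Psi^{\mathrm{glob}}$), and sets $(\Gamma h)(x_u,x_s):=c^*$, where $c^*$ is the parameter at which $\gamma$ meets $\mathrm{graph}(h)$. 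That $\Gamma$ is a well-defined self-map of $\mathcal{G}$ rests on the cone estimates (\ref{eqEstimateReturnMap})--(\ref{eqEstimateCone}) from inside the proof of Lemma~\ref{thLipschitzReturnMap}. (a)~The curve $\gamma$ stays in the center cone of opening $\sigma$ about each of its points (cone invariance applied to perturbations purely in $x_c$, i.e.\ with opening $\vartheta=0$), $\gamma_c$ is strictly monotone with $|\gamma_c(c')-\gamma_c(c)|\ge K_c|c'-c|$, and $\mathrm{graph}(h)$ is $\sigma$-Lipschitz; hence $\phi(c):=\gamma_c(c)-h(\gamma_{u,s}(c))$ is strictly monotone with $|\phi(c')-\phi(c)|\ge(1-\sigma^2)K_c|c'-c|>|c'-c|$, using $K_c(1-\sigma^2)>1$. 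Since $\Psi(x_u,x_s,0)$ differs from $\Psi(0,0,0)=0$ by only $\mathcal{O}((\varepsilon+\delta^\beta)\delta)$ --- by the small $x_c$-drift of $\Psi^{\mathrm{loc}}$ in Corollary~\ref{thContinuousLocalMap}, the Lipschitz bound on the Kasner map, and $\varepsilon L_c\ll1$ --- the value $\phi(0)$ is small compared to $\delta$, so $\phi$ changes sign on $[-\delta,\delta]$ and the intermediate value theorem plus strict monotonicity give a unique $c^*\in(-\delta,\delta)$; thus $\Gamma h$ is defined on all of $[0,\delta]^2$. (b)~If $p\neq q$ lie in $\mathrm{graph}(\Gamma h)$ but $q\notin C^{\mathrm{u,s}}_p$, then $q$ is in the wide center cone of opening $1/\sigma$ about $p$, which is forward invariant and carries the $K_c$-expansion (the estimates (\ref{eqEstimateExpansion}), (\ref{eqEstimateCone}) hold up to $\vartheta=1/\sigma$); combining $|(\Psi q-\Psi p)_c|\ge K_c|q_c-p_c|$ with the $\sigma$-Lipschitz bound for $h$ and $\|(\Psi q-\Psi p)_{u,s}\|\le C^{\mathrm{return}}\delta^\beta\|q-p\|$ yields a contradiction for $\delta$ small; hence $\Gamma h$ is $\sigma$-Lipschitz. (c)~$(\Gamma h)(0,0)=0$ since $\Psi(0,0,0)=(0,0,0)\in\mathrm{graph}(h)$; and on the singular line $\Psi(0,x_s,c)=(0,0,\Phi(c))$, so $(0,x_s,c)\in\mathrm{graph}(\Gamma h)$ forces $\Phi(c)=h(0,0)=0$, whence $c=0$ by expansion of $\Phi$, i.e.\ $(\Gamma h)(0,x_s)\equiv0$.

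To see that $\Gamma$ is a contraction, fix $(x_u,x_s)$ and put $c_i:=(\Gamma h_i)(x_u,x_s)$, $\phi_i(c):=\gamma_c(c)-h_i(\gamma_{u,s}(c))$. Then $|\phi_2(c_1)|=|h_1(\gamma_{u,s}(c_1))-h_2(\gamma_{u,s}(c_1))|\le\|h_1-h_2\|_\infty$ (since $\phi_1(c_1)=0$), while $|\phi_2(c_1)|=|\phi_2(c_1)-\phi_2(c_2)|\ge(1-\sigma^2)K_c|c_1-c_2|$, so $\|\Gamma h_1-\Gamma h_2\|_\infty\le\bigl((1-\sigma^2)K_c\bigr)^{-1}\|h_1-h_2\|_\infty$ with constant $<1$. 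By the Banach fixed point theorem $\Gamma$ has a unique fixed point $x_c(\cdot,\cdot)\in\mathcal{G}$, Lipschitz with $x_c(0,x_s)=0$. Finally, $W:=\mathrm{graph}(x_c)$ is the local stable set: the identity $\{p\in\overline{\tilde\Sigma}^{\mathrm{in}}\mid\Psi p\in W\}=W$ gives $\Psi(W)\subseteq W$, and along an orbit in $W$ consecutive points lie in one another's transverse cone (they sit on the $\sigma$-Lipschitz graph $W$, with $0\in W$), so Lemma~\ref{thLipschitzReturnMap}(ii) yields $\|(\Psi^n p)_{u,s}\|\le K_{u,s}^{-n}\|p_{u,s}\|\to0$ and then $\|\Psi^n p\|\le(1+\sigma)\|(\Psi^n p)_{u,s}\|\to0$; conversely, any point whose forward orbit stays in $\overline{\tilde\Sigma}^{\mathrm{in}}$ lies --- by the cone argument of (b) applied to pairs of such points, i.e.\ because an excursion in $x_c$ would be amplified beyond $\overline{\tilde\Sigma}^{\mathrm{in}}$ --- on a $\sigma$-Lipschitz graph with at most one point over each fiber, which coincides with $W$ because $W$ already covers $[0,\delta]^2$.

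The main obstacle is the non-standard ingredient flagged in the text: ensuring $\Gamma h$ is defined on the \emph{full} base $[0,\delta]^2$ rather than on a domain shrinking under iteration. A naive forward graph transform $G_h\mapsto\Psi(G_h)$ would both shrink the transverse domain and \emph{expand} in sup-norm; the preimage formulation above avoids this, but then one must show the vertical curves $c\mapsto\Psi(x_u,x_s,c)$ genuinely reach across $\mathrm{graph}(h)$. This is precisely where the collapse of the singular line $\{x_u=0\}$ --- infinite transverse contraction, with $\Phi$ acting on $x_c$ --- interacts with the construction: it pins $x_c(0,x_s)\equiv0$, but one also needs the quantitative control $\Psi(x_u,x_s,0)_c=\mathcal{O}((\varepsilon+\delta^\beta)\delta)$ together with a genuine expansion $K_c>1$ (a fixed number, only mildly below the Kasner expansion $\tilde K_c$, with $\varepsilon$ taken small enough afterwards) to make $\phi$ change sign uniformly over the fiber base. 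Everything else reduces to cone estimates already in hand and the contraction mapping principle.
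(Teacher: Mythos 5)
Your proposal is correct and follows essentially the same route as the paper: a graph transform on $\sigma$-Lipschitz functions over the transverse variables, defined by pulling the graph back under $\Psi$, with the cone conditions of Lemma~\ref{thLipschitzReturnMap} supplying well-definedness, invariance of the Lipschitz bound, and the contraction factor $\bigl(K_c(1-\sigma^2)\bigr)^{-1}<1$; your handling of the singular line $\{x_u=0\}$ via the Kasner map also matches the paper's. The only presentational differences are that you work in the slightly larger space requiring only $h(0,0)=0$ and recover $h(0,x_s)\equiv0$ a posteriori, and you phrase existence and contraction through the auxiliary function $\phi(c)=\gamma_c(c)-h(\gamma_{u,s}(c))$ with an explicit $\mathcal{O}\bigl((\varepsilon+\delta^\beta)\delta\bigr)$ bound on $\phi(0)$, rather than the paper's purely geometric intermediate-value argument on the image of the vertical fiber --- these are cosmetic, not a different method.
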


\begin{proof}
The idea of the proof is to define a graph transformation on the space of Lip\-schitz graphs $\{(x_u, x_s, x_c(x_u, x_s))\}$, 
figure \ref{figGraphTransform}, by the inverse return map $\Psi^{-1}$.
The cone invariance of the previous lemma will ensure that the Lipschitz property of the graph is preserved.
Due to the expansion/contraction conditions of the previous lemma, the graph transformation turns out to be a contraction
on the space of graphs. The fixed point of this contraction then yields the claim.


\begin{figure}
\setlength{\unitlength}{0.01\textwidth}
\centering
\begin{picture}(60,40)(-5,-5)
\put( -5,  0){\line(1,0){60}}\put(55,-0.5){\makebox(0,0)[tr]{$x_u$}}
\put(  0, -5){\line(0,1){40}}\put( 0,35){\makebox(0,0)[tr]{$x_c\,$}}
\put(-3.75,-5){\line(3,4){30}}\put(26.25,35){\makebox(0,0)[tr]{$x_s\;\;$}}
\linethickness{2pt}
\put(  0,  0){\line(3,4){21}}
\put(  0,  0){\line(1,0){35}}
\cbezier(21,28)(30,18)(45,25)(55,35)
\cbezier(35,0)(40,30)(50,5)(55,35)
\end{picture}
\caption{\label{figGraphTransform}Lipschitz graph $x_c=x_c(x_s,x_u)$ with fixed line $x_c(x_s,0)\equiv 0$. 
The graph transformation given by the inverse return map $\Psi^{-1}$ expands $x_{u,s}$ and contracts $x_c$.}
\end{figure}

To make this idea precise, consider the Banach space of Lipschitz functions 
\[
\begin{array}{rcrll}
 X &=& \{ &
             \zeta\,:\;[0,\delta]^2\to[-\delta,\delta],\;x_{u,s}=(x_u,x_s) \mapsto x_c=\zeta(x_{u,s})
\\ &&& \mbox{such that } \zeta(x_s,0)\equiv 0,\; \mathrm{Lip}(\zeta)\le\sigma                          & \}
\end{array}
\]
with sup-norm. The parameters $\delta,\sigma < 1$ correspond to those of lemma \ref{thLipschitzReturnMap}.

Define a map $G:X\to X$ as graph transformation, i.e.~$\mathrm{graph}(G\zeta) := \Psi^{-1}\mathrm{graph}(\zeta)$.
More precisely
\[
\begin{array}{rcll}
G\zeta\left(\left(\Psi^{-1}(x_{u,s},\zeta(x_{u,s}))\right)_{u,s}\right) 
  &:=& \left(\Psi^{-1}(x_{u,s},\zeta(x_{u,s}))\right)_c,
  & \mbox{for } x_{u,s} \ne 0,
\\
G\zeta(0,x_s) &:=& 0.
  & \mbox{for arbitrary } x_s,
\end{array}
\]
The first equation implicitly assumes that $(x_{u,s},\zeta(x_{u,s}))$ has a pre-image under $\Psi$ and that it lies in the domain.
The second equation just gives the pre-image of the origin under $\Psi$.
Note the restriction to non-negative $x_u,x_s$ consistent with the structure of the
Bianchi system that preserves the signs of the spatial curvature variables $N_k$.
In fact due to these constraints the boundary $x_s=0$ becomes singular after one excursion, $\Psi(x_u,x_s=0,x_c)_u = 0$.
This fact is not explicitly used in the following but justifies the choice of the domain.

We will prove the following claims
\begin{enumerate}
\item[(i)] domain of definition:
  for all $\zeta\in X$ and $x_{u,s} \in (0,\delta]\times[0,\delta]$ 
  there exists $\tilde{x}_{u,s} \in [0,\delta]^2\setminus\{0\}$, such that
  $\left(\Psi^{-1}(\tilde{x}_{u,s},\zeta(\tilde{x}_{u,s}))\right)_{u,s} = x_{u,s}$.
\item[(ii)] well-definedness:
  for all $\zeta\in X$ and $x_{u,s}, \tilde{x}_{u,s} \in [0,\delta]^2\setminus\{0\}$ the following holds.
  If $\left(\Psi^{-1}(x_{u,s},\zeta(x_{u,s}))\right)_{u,s} = \left(\Psi^{-1}(\tilde{x}_{u,s},\zeta(\tilde{x}_{u,s}))\right)_{u,s} \in [0,\delta]^2$
  then already $x_{u,s} = \tilde{x}_{u,s}$.
\end{enumerate}
Conditions (i) and (ii) yield a well defined function $G\zeta$ for every $\zeta\in X$. 
\begin{enumerate}
\item[(iii)] Lipschitz property:
  for all $\zeta\in X$ the function $G\zeta$ is again Lipschitz continuous with $\mathrm{Lip}(G\zeta)\le \sigma$.
\item[(iv)] contraction: The exists a constant $0<\kappa<1$ such that
  for all $\zeta,\tilde{\zeta} \in X$ the estimate 
  $\| G\tilde{\zeta} - G\zeta \|_\mathrm{sup} \le \kappa \| \tilde{\zeta} - \zeta \|_\mathrm{sup}$ holds.
\end{enumerate}
Conditions (i)--(iii) prove that the graph transformation $G$ 
indeed maps Lipschitz functions in $X$ to Lipschitz functions in $X$.
Condition (iv) provides a contraction.
If all four conditions hold, then by contraction-mapping theorem there is a unique fixed point, 
i.e.~a Lipschitz function $\zeta^*\in X$ with $G\zeta^* = \zeta^*$. 
Its graph is a locally invariant manifold under $\Psi$. It is also the stable set of the origin due 
to the cone conditions of lemma \ref{thLipschitzReturnMap}. This yields the claim of the theorem.
Therefore it remains to prove (i)--(iv):

(i) Let $\zeta\in X$ and $x_{u,s} \in (0,\delta]\times[0,\delta]$ be given. 
The straight line $\{x_{u,s}\} \times [-\delta,\delta]  \subset \tilde\Sigma^\mathrm{in}$ is contained in the cone
$C_{(x_{u,s},0)}^c$. 
We use lemma \ref{thLipschitzReturnMap}: $\Psi(x_{u,s},0) \in \tilde{\Sigma}$ by invariance and contraction of the cone $C_0^\mathrm{u,s}$.
Thus, by invariance and Expansion of $C_{(x_{u,s},0)}^c$, 
the image of the straight line $\{x_{u,s}\} \times [-\delta,\delta]$ under $\Psi$ contains a curve in $C_{\Psi(x_{u,s},0)}^\mathrm{c}$ 
connecting the extremal plains $\{x_c=\pm\delta\}$. By intermediate-value theorem this curve must intersect the graph of $\zeta$.

(ii) Let $\zeta\in X$ and $x_{u,s},\tilde{x}_{u,s} \in [0,\delta]^2\setminus\{0\}$ be given with 
$\left(\Psi^{-1}(x_{u,s},\zeta(x_{u,s}))\right)_{u,s} = \left(\Psi^{-1}(\tilde{x}_{u,s},\zeta(\tilde{x}_{u,s}))\right)_{u,s} \in [0,\delta]^2$. 
Note in particular that $\left(\Psi^{-1}(x_{u,s},\zeta(x_{u,s}))\right)_u \ne 0$, as the singular line is mapped onto the origin.

Thus $\Psi^{-1}(\tilde{x}_{u,s},\zeta(\tilde{x}_{u,s})) \in C_{\Psi^{-1}(x_{u,s},\zeta(x_{u,s}))}^\mathrm{c}$ and by cone invariance 
$(\tilde{x}_{u,s},\zeta(\tilde{x}_{u,s})) \in C_{(x_{u,s},\zeta(x_{u,s}))}^\mathrm{c}$.
The Lipschitz-bound on $\zeta\in X$ on the other hand implies 
$(\tilde{x}_{u,s},\zeta(\tilde{x}_{u,s})) \in C_{(x_{u,s},\zeta(x_{u,s}))}^\mathrm{s,u}$,
thus $(\tilde{x}_{u,s},\zeta(\tilde{x}_{u,s})) = (x_{u,s},\zeta(x_{u,s}))$.

(iii) Again, the Lipschitz-bound on $\zeta\in X$ translates to 
$(\tilde{x}_{u,s}, \zeta(\tilde{x}_{u,s})) \in C_{(x_{u,s}, \zeta(x_{u,s}))}^\mathrm{u,s}$ for all $x,\tilde{x}$.
Cone invariance, lemma \ref{thLipschitzReturnMap}, immediately yields the Lipschitz bound on $G\zeta$.

(iv) The singular line is fixed by construction, thus we only have to estimate the distance of the nonsingular part.
Let $\zeta,\tilde{\zeta}\in X$ and $x_{u,s},\tilde{x}_{u,s} \in [0,\delta]^2\setminus\{0\}$ be given with 
$\left(\Psi^{-1}(x_{u,s},\zeta(x_{u,s}))\right)_{u,s} = \left(\Psi^{-1}(\tilde{x}_{u,s},\tilde{\zeta}(\tilde{x}_{u,s}))\right)_{u,s} \in [0,\delta]^2$.

Again, this implies $\Psi^{-1}(\tilde{x}_{u,s},\tilde{\zeta}(\tilde{x}_{u,s})) \in C_{\Psi^{-1}(x_{u,s},\zeta(x_{u,s}))}^\mathrm{c}$ 
and by cone invariance we have $(\tilde{x}_{u,s},\tilde{\zeta}(\tilde{x}_{u,s})) \in C_{(x_{u,s},\zeta(x_{u,s}))}^\mathrm{c}$.
Thus we can estimate
\[
\begin{array}{rcl}
|\tilde{\zeta}(\tilde{x}_{u,s}) - \zeta(x_{u,s})| 
   &\le& \|\tilde{\zeta} - \zeta\|_\mathrm{sup} + \sigma\|\tilde{x}_{u,s} - x_{u,s}\|
\\ &\le& \|\tilde{\zeta} - \zeta\|_\mathrm{sup} + \sigma^2 |\tilde{\zeta}(\tilde{x}_{u,s}) - \zeta(x_{u,s})|
\end{array}
\]
The first inequality uses the Lipschitz bound on $\zeta\in X$ whereas the second one uses the cone $C_{(x_{u,s},\zeta(x_{u,s}))}^\mathrm{c}$.
We obtain
\[
|\tilde{\zeta}(\tilde{x}_{u,s}) - \zeta(x_{u,s})|  \;\le\;  \frac{1}{1-\sigma^2} \|\tilde{\zeta} - \zeta\|_\mathrm{sup}.
\]
On the other hand, the expansion of $C_{\Psi^{-1}(x_{u,s},\zeta(x_{u,s}))}^\mathrm{c}$ under $\Psi$ yields
\[
\begin{array}{rcl}
&& \hspace{-5em} 
\left| (G\tilde{\zeta} - G\zeta) \left(\left(\Psi^{-1}(x_{u,s},\zeta(x_{u,s}))\right)_{u,s}\right) \right|
\\ &=& 
\left| \left(\Psi^{-1}(x_{u,s},\zeta(x_{u,s}))\right)_c - \left(\Psi^{-1}(\tilde{x}_{u,s},\tilde{\zeta}(\tilde{x}_{u,s}))\right)_c \right|
\\ &\le&
\frac{1}{K_c} \left| \zeta(x_{u,s}) - \tilde{\zeta}(\tilde{x}_{u,s}) \right|
\\ &\le&
\frac{1}{K_c(1-\sigma^2)} \|\tilde{\zeta} - \zeta\|_\mathrm{sup}.
\end{array}
\]
Lemma \ref{thLipschitzReturnMap} provides constants $K_c$, $\sigma$ with $K_c(1-\sigma^2) > 1$. 
Therefore the last estimates yields the claimed contraction, $\kappa = 1/(K_c(1-\sigma^2))$, and finishes the proof.
\end{proof}

The stable set given by the theorem determines the initial conditions of trajectories that possess the
same $\alpha$-limit dynamics as the 3-periodic heteroclinic cycle:
there exists a codimension-one set of trajectories that converge to the heteroclinic cycle as 
they approach the big-bang singularity in backwards time.


\section{Discussion}
\label{secDiscussion}

We have shown that the set of trajectories within the vacuum Bianchi system, that start close to the period-3 heteroclinic cycle and follow it in the $\alpha$-limit $t\to-\infty$, form a codimension-one Lipschitz-manifold.

This holds true in Bianchi-IX as well as Bianchi-VIII domains. 
In fact, the period-3 heteroclinic cycle is a cycle of three pairs of heteroclinic orbits and the 
domain depends on the choice of representatives.

The same technique is applicable to non-vacuum solutions provided the additional eigenvalue $\mu_\Omega$ 
of the Kasner equilibrium is stronger than the unstable eigenvalue, see (\ref{eqKasnerLinearization}) in reversed time direction.
In fact the analysis of the local map in section \ref{secLocalMap} remains valid for arbitrary stable dimension and order of stable eigenvalues, 
provided all stable eigenvalues are stronger than the unstable one, just the dimension of the component $x_{s,ss}$ changes.
(Note also that the linearization (\ref{eqKasnerLinearization}) is diagonalizable even at points of the Kasner circle 
at which some of the eigenvalues coincide.)
The crucial assumption is a positive gap, $|\mu_s| - |\mu_u| > 0$, 
between the weakest stable and the unstable eigenvalue of the diagonalizable linearization at the Kasner circle.
The result on the stable manifold of the 3-periodic heteroclinic chain 
therefore extends to the non-vacuum Bianchi model provided $3(2-\gamma) > \frac{3}{2}(1-\sqrt{5})$,
i.e.
\[
\gamma < \frac{5-\sqrt{5}}{2},
\]
see (\ref{eqKasnerLinearization}) and note the reversed time.

The result also extends to arbitrary periodic heteroclinic chains, as they keep a uniform distance 
from the Taub points $T_1, T_2, T_3$. To this end it is important to note that all our estimates are
uniform on compact pieces of the Kasner circle which do not contain the Taub points. Similarly,
the Kasner map $\Phi$ is uniformly expanding on such compacta. 
Therefore, the return map associated to an arbitrary periodic chain is composed of finitely many local 
and global maps as studied in this paper.
This yields uniform estimates for each of the local and global maps. Hence, the graph transformation 
of theorem \ref{thStableSet} can be applied to the concatenation of these local return and global excursion maps.
As above we get a codimension one invariant Lipschitz manifold of points that approach the periodic chain
of heteroclinic orbits.
Note that these chains may even include heteroclinic orbits, that approach the Kasner circle in the direction 
corresponding to the weak eigenvalue.
The results on the local passage remain valid for in-sections $\{x_s = 1\}$.
We used notation consistent with the non-principal direction of the periodic-3 chain, 
however the order of stable eigenvalues was not used in any of the proofs.

The unstable eigenvalue along the Kasner circle $\mathcal{K}$ in reversed time direction is bounded by $|\mu_u| < 2$, see (\ref{eqKasnerLinearization}) in reversed time direction.
The extension of the result on arbitrary periodic chains to the non-vacuum Bianchi model is therefore valid as long as the eigenvalue $\mu_\Omega$ to the eigenvector transverse to the vacuum boundary is stronger than 2, i.e. 
\[
\gamma < \frac{4}{3},
\]
including all matter models between dust ($\gamma=1$, included) and radiation ($\gamma=4/3$, excluded).

Non-periodic heteroclinic sequences can be treated in the same way, as long as they keep a uniform distance 
from the Taub points $T_1, T_2, T_3$. Thus the limit of the analysis presented here is the following result:

\begin{rem}
Let any heteroclinic chain of Bianchi-II vacuum solutions, i.e.~a sequence $S = (\Phi^n(q))_{n\ge0}$ of the Kasner map (\ref{eqKasnerMap}), be given 
that keeps a uniform distance from the Taub points $T_1, T_2, T_3$.

Then there exists a local codimension-1 Lipschitz manifold $W^u(S)$ of initial conditions with backward trajectories in the Bianchi system (\ref{eq5dimBianchi}) 
following the given chain $S$.

More precisely, take such an initial condition, say $x_0\in W^u(S)$, and its backward trajectory $\{x_0\cdot t, \; t \le 0\}$. 
Then there exists a sequence $(t_n)_{n\ge0}$, $t_n < 0$, $t_n \searrow -\infty$, such that the distance of $x_0 \cdot t_n$ and $\Phi^n(q)$ goes to zero as $n\to\infty$.
Additionally the pieces $\{x_0\cdot t, \; t_{n+1} \le t \le t_n\}$ of the trajectory converge to the respective heteroclinic orbits connecting $\Phi^{n+1}(q)$ and $\Phi^{n}(q)$, e.g. with respect to Hausdorff distance.
\end{rem}

Note that each transition in the sequence $S$ corresponds to a pair of heteroclinic orbits contained in one of the 
three Bianchi class II ellipsoids. 
Fixing the sign of the nonzero variable $N_k$ in each of the three ellipsoids associates a unique heteroclinic chain 
to the Kasner sequence. Three equal signs yield a manifold in Bianchi class IX, 
the other choices correspond to Bianchi class VIII.
The result holds in all cases and gives 8 local Lipschitz manifold with boundaries in Bianchi classes VI$_0$, VII$_0$.
These boundaries correspond to solutions not following the chain but converging to a Kasner equilibrium after one transition.
The union of the 8 pieces yields 2 local Lipschitz manifolds without boundaries 
but crossing the subspaces of Bianchi classes VI$_0$, VII$_0$.
The manifolds are attached to the 2 heteroclinic orbits given by the first transition of the sequence $S$.

Independently, in \cite{Beguin2010-BianchiAsymptotics} similar results under additional non-resonance conditions 
of the eigenvalues of the equilibria along the Kasner sequence are obtained by linearization techniques.
Remarkably, these non-resonance conditions exclude in particular all periodic sequences.

The major question that still remains open in our treatment is the dynamics corresponding to heteroclinic 
sequences that come arbitrarily close to the Taub points $T_1, T_2, T_3$.
In fact, generic sequences fall into this last category. Here, the very technical work of Reiterer and Trubowitz \cite{ReitererTrubowitz2010-BKL} provides positive results at least in the vacuum case. 

Nevertheless, the question whether the Kasner map already determines the \emph{complete}
dynamics of nearby solutions is still open. The answer will probably require a more detailed study of the 
delicate behavior close to the Taub points.

\noindent
\textbf{Acknowledgement:} The first author was partially supported by the 
Collaborative Research Center 647 ``Space---Time---Matter'' of the German Research Foundation (DFG).
The authors thank the anonymous referee for several suggestions which helped to improve the presentation 
of this work.


\bibliographystyle{alpha}
\bibliography{LieHaeWebGeo-Bianchi3Cycle}

\begin{thebibliography}{SSTC98}

\bibitem[BdSR86]{BugalhoEtAl1986-BianchiIX}
M.H. Bugalho, A.~Rica da~Silva, and J.~Sousa Ramos.
\newblock The order of chaos on a {B}ianchi-{IX} cosmological model.
\newblock {\em Gen. Relativ. Gravitation}, 18:1263--1274, 1986.

\bibitem[B{\'e}g10]{Beguin2010-BianchiAsymptotics}
F.~B{\'e}guin.
\newblock Aperiodic oscillatory asymptotic behavior for some bianchi
  spacetimes.
\newblock arXiv:1004.2984, 2010.

\bibitem[HU09]{HeinzleUggla2009-Mixmaster}
J.M. Heinzle and C.~Uggla.
\newblock Mixmaster: Fact and belief.
\newblock {\em Classical Quantum Gravity}, 26(7), 2009.

\bibitem[Mis69]{Misner1969-Mixmaster}
C.W. Misner.
\newblock Mixmaster universe.
\newblock {\em Phys. Rev. Lett.}, 22:1071--1074, 1969.

\bibitem[Rin01]{Ringstroem2001-BianchiIX}
H.~Ringstr{\"o}m.
\newblock The {B}ianchi {IX} attractor.
\newblock {\em Ann. Henri Poincar{\'e}}, 2(3):405--500, 2001.

\bibitem[Rin09]{Rinstroem2009-CauchyGeneralRelativity}
H.~Ringstr{\"o}m.
\newblock {\em The {C}auchy problem in general relativity}.
\newblock ESI Lectures in Mathematics and Physics. European Mathematical
  Society, Z{\"u}rich, 2009.

\bibitem[RT10]{ReitererTrubowitz2010-BKL}
M.~Reiterer and E.~Trubowitz.
\newblock {The BKL Conjectures for Spatially Homogeneous Spacetimes}.
\newblock arXiv:1005.4908, 2010.

\bibitem[SSTC98]{ShilnikovTuraevChua1998-MethodsNonlinearDynamicsI}
L.~P. Shilnikov, A.~L. Shilnikov, D.~V. Turaev, and L.~O. Chua.
\newblock {\em Methods of Qualitative Theory in Nonlinear Dynamics I}, volume~4
  of {\em Series on Nonlinear Science, Series A}.
\newblock World Scientific, 1998.

\bibitem[WE05]{WainwrightEllis2005-Cosmology}
J.~Wainwright and G.F.R. Ellis, editors.
\newblock {\em Dynamical systems in cosmology}.
\newblock Cambridge University Press, Cambridge, 2005.

\bibitem[WH89]{WainwrightHsu1989-BianchiA}
J.~Wainwright and L.~Hsu.
\newblock A dynamical systems approach to {B}ianchi cosmologies: orthogonal
  models of a class {A}.
\newblock {\em Classical Quantum Gravity}, 6(10):1409--1431, 1989.

\end{thebibliography}

\end{document}